\newtheorem{theorem}{Theorem}
\newtheorem{lemma}{Lemma}
\newtheorem{proposition}{Proposition}
\newtheorem{definition}{Definition}
\newtheorem{corollary}{Corollary}
\newtheorem{claim}{Claim}
\newcommand{\f}[2]{\frac{#1}{#2}}
\newcommand{\dpr}[2]{\langle #1,#2 \rangle}
\newcommand{\al}{\alpha}
\newcommand{\de}{\delta}
\newcommand{\De}{\Delta}
\newcommand{\ve}{\varepsilon}
\newcommand{\ka}{\kappa}
\newcommand{\la}{\lambda}
\newcommand{\La}{\Lambda}
\newcommand{\si}{\sigma}
\newcommand{\Om}{\Omega}
\newcommand{\rn}{{\mathbf R}^n}
\newcommand{\rone}{\mathbf R^1}
\newcommand{\cm}{\mathcal M}
\newcommand{\cd}{\mathcal D}
\newcommand{\p}{\partial}
\newcommand{\beq}{\begin{equation}}
\newcommand{\eeq}{\end{equation}}
\newcommand{\beqna}{\begin{eqnarray*}}
\newcommand{\eeqna}{\end{eqnarray*}}
\newcommand{\beqn}{\begin{equation*}}
\newcommand{\eeqn}{\end{equation*}}
\newcommand{\bp}{\begin{proof}}
\newcommand{\ep}{\end{proof}}
\newcommand{\bprop}{\begin{proposition}}
\newcommand{\eprop}{\end{proposition}}
\newcommand{\bt}{\begin{theorem}}
\newcommand{\et}{\end{theorem}}
\newcommand{\bex}{\begin{Example}}
\newcommand{\eex}{\end{Example}}
\newcommand{\bc}{\begin{corollary}}
\newcommand{\ec}{\end{corollary}}
\newcommand{\bcl}{\begin{claim}}
\newcommand{\ecl}{\end{claim}}
\newcommand{\bl}{\begin{lemma}}
\newcommand{\el}{\end{lemma}}
\begin{document}

\title
[Solitary waves for Hertzian chains]
{On the existence of solitary traveling waves for generalized Hertzian chains}

\author{Atanas Stefanov}
\author{Panayotis Kevrekidis}

\address{Atanas Stefanov\\ 
Department of Mathematics \\
University of Kansas\\
1460 Jayhawk Blvd\\ Lawrence, KS 66045--7523}
\address{Panayotis Kevrekidis\\
Lederle Graduate Research Tower\\ 
Department of Mathematics and  Statistics\\
University of Massachusetts\\
Amherst, MA 01003}

\email{stefanov@math.ku.edu}
\email{kevrekid@math.umass.edu}

\thanks{Stefanov's research is supported in part by  
 NSF-DMS 0908802.  Kevrekidis is supported by NSF-DMS-0806762,
NSF-CMMI-1000337, as well as by the Alexander von Humboldt
Foundation and the Alexander S. Onassis Public Benefit Foundation
(RZG 003/2010-2011). Kevrekidis is grateful to Dr. G. Theocharis
for numerous discussions on this theme and to
Prof. Chiara Daraio for stimulating his interest on this subject.}
\date{\today}

\subjclass[2000]{37L60, 35C15, 35Q51}

\keywords{Solitary Waves; Granular Chains; Nonlinear Lattices; 
Traveling Waves; Calculus of Variations}

\begin{abstract}
We consider the question of existence of 
``bell-shaped'' (i.e. non-increasing for $x>0$ and non-decreasing for $x<0$)  traveling waves for the strain variable
of the generalized Hertzian model describing, in the special
case of a $p=3/2$ exponent, the dynamics
of a granular chain. 
%In the seminal paper by Friesecke and 
%Wattis [Comm. Math. Phys {\bf 161}, 394 (1994)], they have shown the 
%existence of traveling waves for this model, but  the ``bell-shapedness'' of 
%these were left open. 
%On the other hand, numerical simulations of these models have generically
%produced  solitary waves with this property [see e.g. the recent review
%by S. Sen, J. Hong, J. Bang, E. Avalos and R. Doney, 
%Phys. Rep. {\bf 462}, 21 (2008)]. In the work by English and Pego 
%[Proceedings of the AMS {\bf 133}, 1763 (2005)], the authors provided a 
%different formulation of the problem, which implies a doubly 
%exponential decay rate of   bell-shaped traveling waves, if they indeed 
%exist.  Thus, it seems natural for one to conjecture the existence of such 
%waves. 
%{\it In the present paper, we confirm the conjecture. Namely, we show the 
%existence of bell-shaped traveling waves}. 
The proof  of existence of such waves is based on the English 
and Pego [Proceedings of the AMS {\bf 133}, 1763 (2005)]
formulation of the problem. More specifically, we construct an appropriate 
energy functional, for which we show that the constrained minimization 
problem over 
bell-shaped entries has a solution. We also provide an alternative proof of 
the Friesecke-Wattis 
result [Comm. Math. Phys {\bf 161}, 394 (1994)], by using the same 
approach (but where the 
minimization is not constrained over bell-shaped curves). 
We briefly discuss and illustrate numerically 
the implications on the doubly exponential decay
properties of the waves, as well as touch upon the modifications
of these properties in the presence of a finite precompression
force in the model.
\end{abstract}

\maketitle

\section{Introduction}

Localized modes on nonlinear lattices have been a topic
of wide theoretical and experimental investigation in a wide
range of areas over the
past two decades. This can be seen, e.g.,  in the recent general
review \cite{flagor}, as well as inferred from the topical reviews
in nonlinear optics \cite{leder}, atomic physics \cite{markus}
and biophysics \cite{peyrard} where relevant discussions have been
given of the theory and corresponding applications. 

One of the areas in which the theoretical analysis has been
especially successful in describing experimental data and
providing insights has been that of granular crystals
\cite{nesterenko1}.
These consist of closely-packed chains of elastically interacting
particles, typically according to the so-called Hertz contact law.
The broad interest in this area has emerged due to the
wealth of available material types/sizes (for which the
Hertzian interactions are applicable)
and the ability to tune the
 dynamic response of the crystals 
to encompass linear, weakly nonlinear, and strongly nonlinear
 regimes \cite{nesterenko1,sen08,nesterenko2,coste97}.  
This type of  flexibility renders these crystals perfect candidates
for many engeenering applications, including shock and energy absorbing 
layers \cite{dar06,hong05,fernando,doney06}, actuating devices
\cite{dev08},  and sound scramblers \cite{dar05,dar05b}. 
It should also be noted that another aspect of such systems
that is of particular appeal is their potential (and controllable)
heterogeneity which gives rise to the potential not only for modified
solitary wave excitations \cite{chiaraus}, but also for 
discrete breather ones \cite{chiarag}.

Another motivation for looking at waves in such lattices stems
from FPU type problems \cite{FPU0,FPU1}. In the prototypical FPU context,
it has been rigorously proved that traveling waves
exist which can be controllably approximated (in the appropriate
weakly supersonic limit) by solitary waves of the Korteweg-de Vries
equation \cite{pegogf1}. However, in more strongly nonlinear regimes,
compact-like excitations have been argued to exist
\cite{nesterenko1,sen08,coste97} (see also \cite{flach1} for
breather type excitations) and have even been computed 
numerically through iterative schemes \cite{pego,pikovsky},
but have not been rigorously proved to exist in the general case.
In the work of \cite{mackay}, the special Hertzian case
%, the work of \cite{mackay} 
%modified 
was adapted appropriately to fit the assumptions of 
the variational-methods' based proof of the traveling wave
existence theorem of \cite{fw} in order to establish these solutions.
However the proof 
%is not especially instructive (or general)
%since it 
does not give information on the wave profile.

Our aim herein is to provide a reformulation 
and illustration of existence of ``bell-shaped'' 
traveling waves in generalized Hertzian lattices. 
Our work is based on the iterative
schemes that have been previously presented in \cite{pikovsky,pego}
for the computation of traveling waves in 
such chains of the form:
\begin{eqnarray}
\ddot{v}_n = [v_{n-1}-v_n]_+^p - [v_n-v_{n+1}]_+^p.
\label{eqn1}
\end{eqnarray}
Here $v_n$ denotes the displacement of the n-th bead from its
equilibrium position.
The special case of Hertzian contacts is for $p=3/2$, but
we consider here the general case of nonlinear interactions with $p>1$.
Notice that the ``+'' subscript in the equations indicates that 
that the quantity in the bracket is only evaluated if positive,
while it is set to $0$, if negative (reflecting in the latter case
the absence of contact between the beads).
The construction of the traveling waves and the derivation
of their monotonicity properties will be based on the strain variant of the
equation for $u_n=v_{n-1}-v_n$ such that:
\begin{eqnarray}
\ddot{u}_n=[u_{n+1}]_+^p - 2 [u_n]_+^p + [u_{n-1}]_+^p,
\label{eqn2}
\end{eqnarray}

Our presentation will proceed as follows. In section 2, we will
give a preliminary mathematical formulation to the problem, 
briefly illustrate its numerical solution and some of its consequences.
Then, we will proceed in section 3 to state and prove our
main result. Some technical aspects of the problem will be
relegated to the appendices of section 4.

\section{Preliminaries and Numerical Results}

%and the consideration of the ensuing advance-delay equation
When seeking traveling wave solutions of the form
$u_n=u(x) \equiv u(n-c t)$, we are led to the advance-delay equation
(setting $c=1$)
\begin{equation}
\label{1}
u''(x)=u^p(x+1)-2 u^p(x)+u^p(x-1). \ \ x\in \rone 
\end{equation} 
where $u$ is a smooth and positive function, with the 
desired monotonicity involving decay in $(0,\infty)$ and 
increase in $(-\infty, 0)$. 
\subsection{Fourier transform and Sobolev spaces} 
We introduce the Fourier transform and its inverse via 
\begin{eqnarray*}
& & \hat{f}(\xi)= \int_{-\infty}^\infty f(x) e^{- 2\pi  i x \xi} dx \\
& & f(x)=    \int_{-\infty}^\infty \hat{f}(\xi) e^{ 2\pi i x \xi} d\xi
\end{eqnarray*}
  As is well-known, 
the second derivative operator $\p_x^2$ has a simple representation via the Fourier transform, namely 
$$
\widehat{\f{d^2}{dx^2} f}(\xi)=- 4\pi^2 \xi^2 \hat{f}(\xi). 
$$
For every $s\geq 0$, we may define 
$\widehat{(-\f{d^2}{dx^2})^s f}(\xi)=(4\pi^2 \xi^2)^s \hat{f}(\xi)$ and 
the Sobolev spaces $W^{s,p}$ via 
$$
\|f\|_{W^{s,p}}=\|f\|_{L^2}+ \|(-\De)^{s/2} f\|_{L^p}. 
$$
We will also consider the operator 
$$
\De_{disc} f(x)= f(x+1)-2 f(x)+f(x-1)
$$
on the space of $L^2(\rone)$ functions.
Using Fourier transform, we may write 
$$
\De_{disc} f(x)=  \int_{-\infty}^\infty \hat{f}(\xi)(e^{2\pi i \xi}+e^{- 2\pi i \xi}-2)
e^{ 2\pi i x \xi}d\xi = -4 \int_{-\infty}^\infty  \sin^2(\pi \xi)  \hat{f}(\xi)e^{2\pi  i x \xi}d\xi. 
$$
In other words, $\De_{disc}$ is given by the symbol $-4\sin^2(\pi \xi)$, that is 
\begin{equation}
\label{PE}
\widehat{\De_{disc} f}(\xi)=-4\sin^2(\pi \xi) \hat{f}(\xi). 
\end{equation}
\subsection{The English-Pego formulation} 
We may rewrite the equation \eqref{1} in the form 
\begin{equation}
\label{p:1}
u''(x)=\De_{disc}[u^p](x)
\end{equation}
Taking Fourier transform on both sides of this (and using   \eqref{PE}),
allows us to write \\  $-4\pi^2 \xi^2 \hat{u}(\xi)=-4 \sin^2(\pi \xi)\widehat{u^p}(\xi)$  or 
\begin{equation}
\label{4}
\hat{u}(\xi)=\f{\sin^2(\pi \xi)}{\pi^2 \xi^2} \widehat{u^p}(\xi)
\end{equation}
Equivalently, taking $\La: \hat{\La}(\xi)= \f{\sin^2(\pi \xi)}{\pi^2 \xi^2}$, 
\begin{eqnarray}
u(x)=  \La* u^p (x)=\int_{-\infty}^\infty \La(x-y) u^p(y) dy=:\cm[u^p].
\label{numerics}
\end{eqnarray}
In other words, we have introduced the convolution operator $\cm$ 
with kernel $\La(\cdot)$.
It is easy to compute that $\La (x)=(1-|x|)_+$ or 
$$
\La(x)=\left\{\begin{array}{l l} 
1-|x| & |x|\leq 1, \\
0 & |x|>1.
\end{array}\right.
$$
Note that  we have the following formula for the convolution $\La*f$ 
\begin{equation}
\label{2}
\cm f=\La*f(x)=\int_{x-1}^{x+1} (1-|x-y|) f(y) dy.
\end{equation}
\subsection{Numerical computations
 and other consequences of the English-Pego formulation}
For reasons of completeness and in order to appreciate the
form of (suitably normalized) solutions of Eq. (\ref{numerics}),
in Fig. \ref{fig1}, we used this equation as a numerical
scheme and proceed to iterate it until convergence. The
figure illustrates the converged profile $\phi$ of the
solution and its corresponding momentum $\phi_t=-c \phi_x$
(for $c=1$). The results of these
computations are shown for different values of $p$ (in order
to yield a sense of the $p-$dependence of the solution, namely 
for $p=3/2$
(the Hertzian case), $p=2$ and $p=3$ (the FPU-motivated cases,
in that they are the purely nonlinear analogs of $\alpha$-
and $\beta$-FPU respectively) and finally $p=10$ (as a 
large-$p$ case representative).
The figure shows the solutions' profile and corresponding
momenta, as well as the semi-logarithmic form of the profile,
so as to clearly illustrate the  doubly
exponential nature of the decay (see below). 
Notice that as $p$ increases, the decay
becomes increasingly steeper. 

\begin{figure}[tbp]
\includegraphics[width=6cm,height=6cm,angle=0,clip]{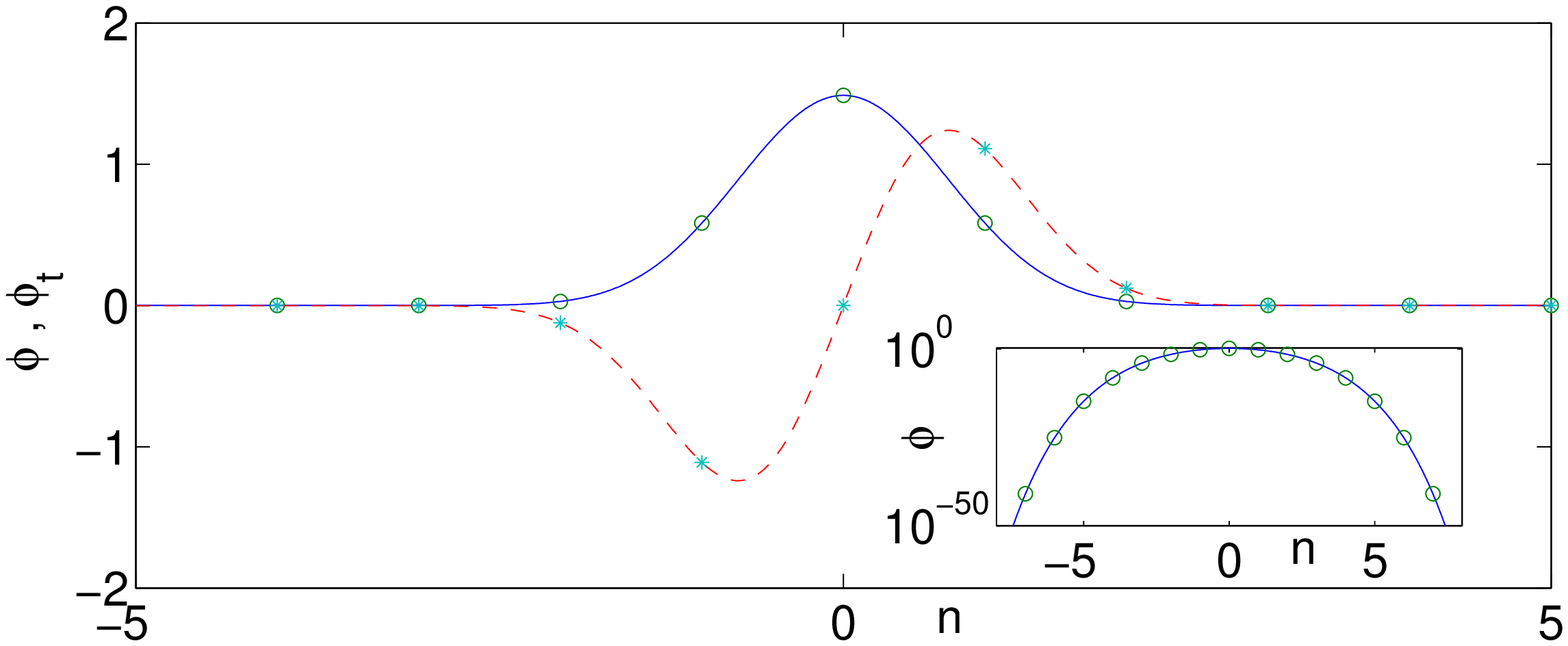}
\includegraphics[width=6cm,height=6cm,angle=0,clip]{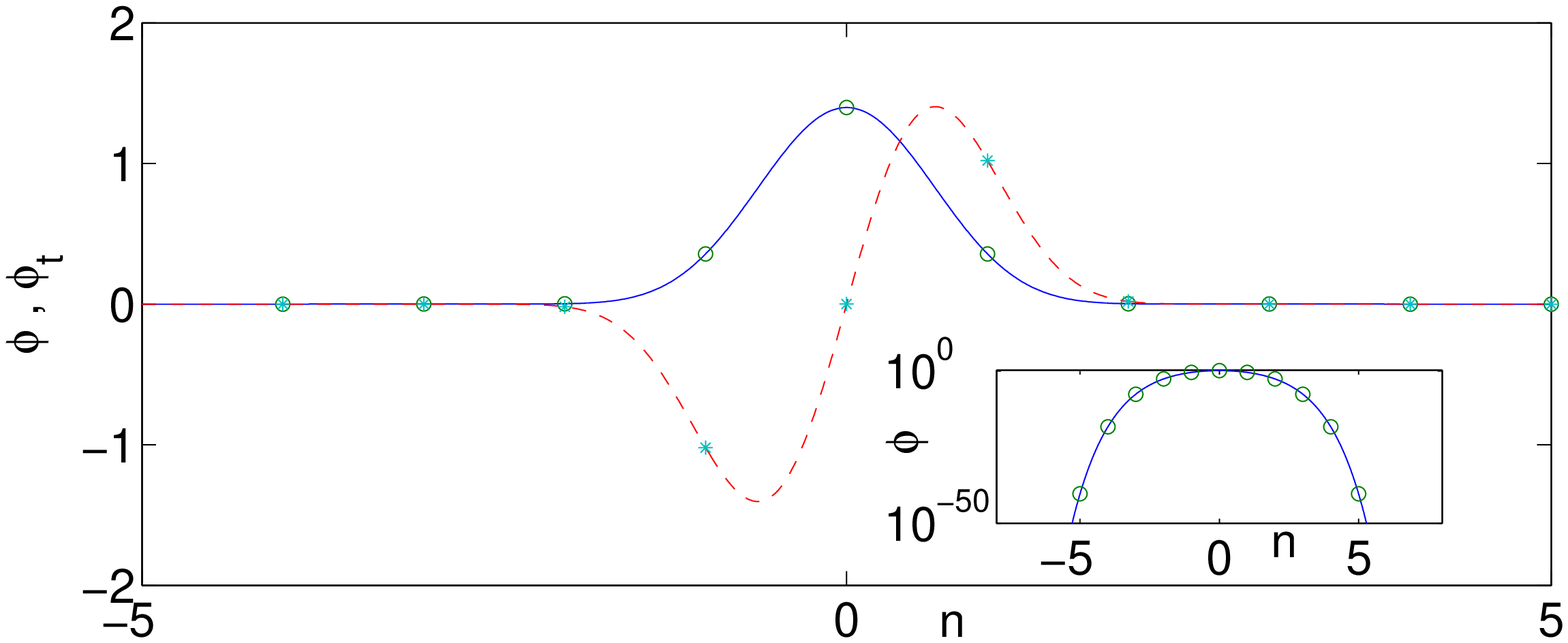}
\includegraphics[width=6cm,height=6cm,angle=0,clip]{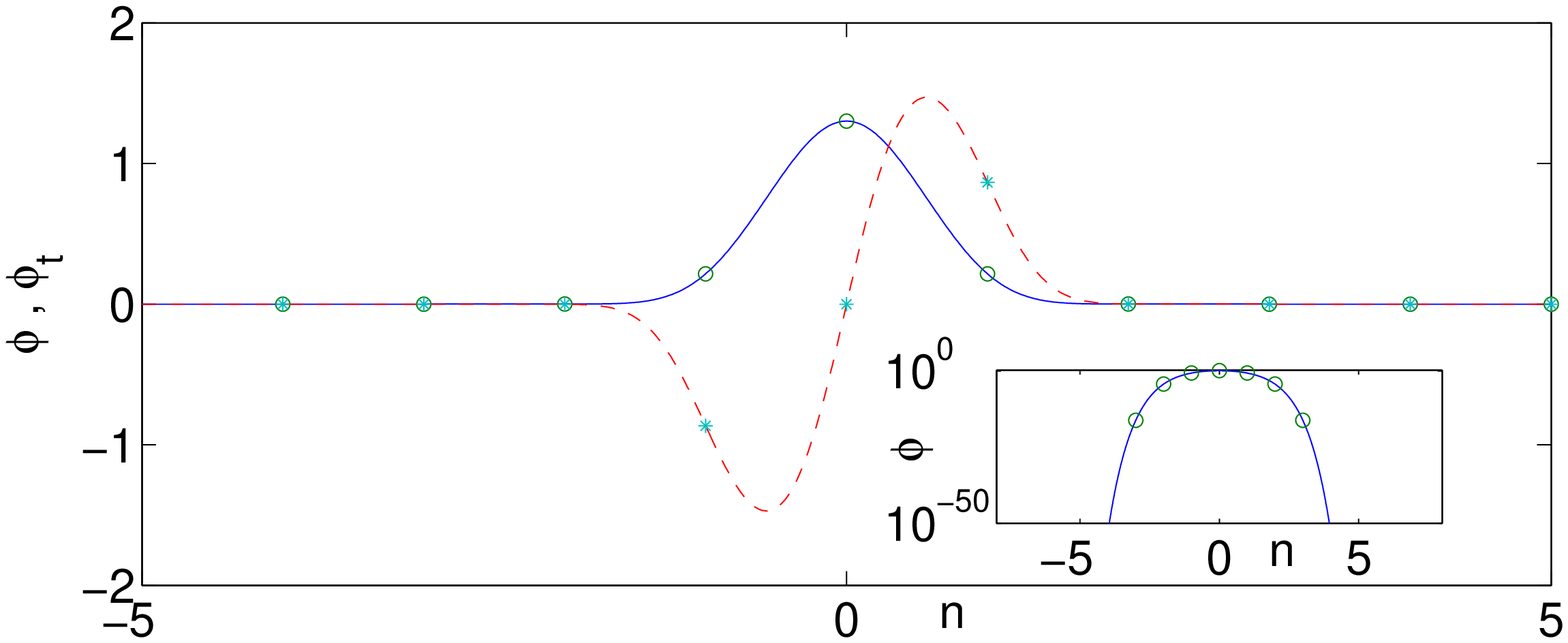}
\includegraphics[width=6cm,height=6cm,angle=0,clip]{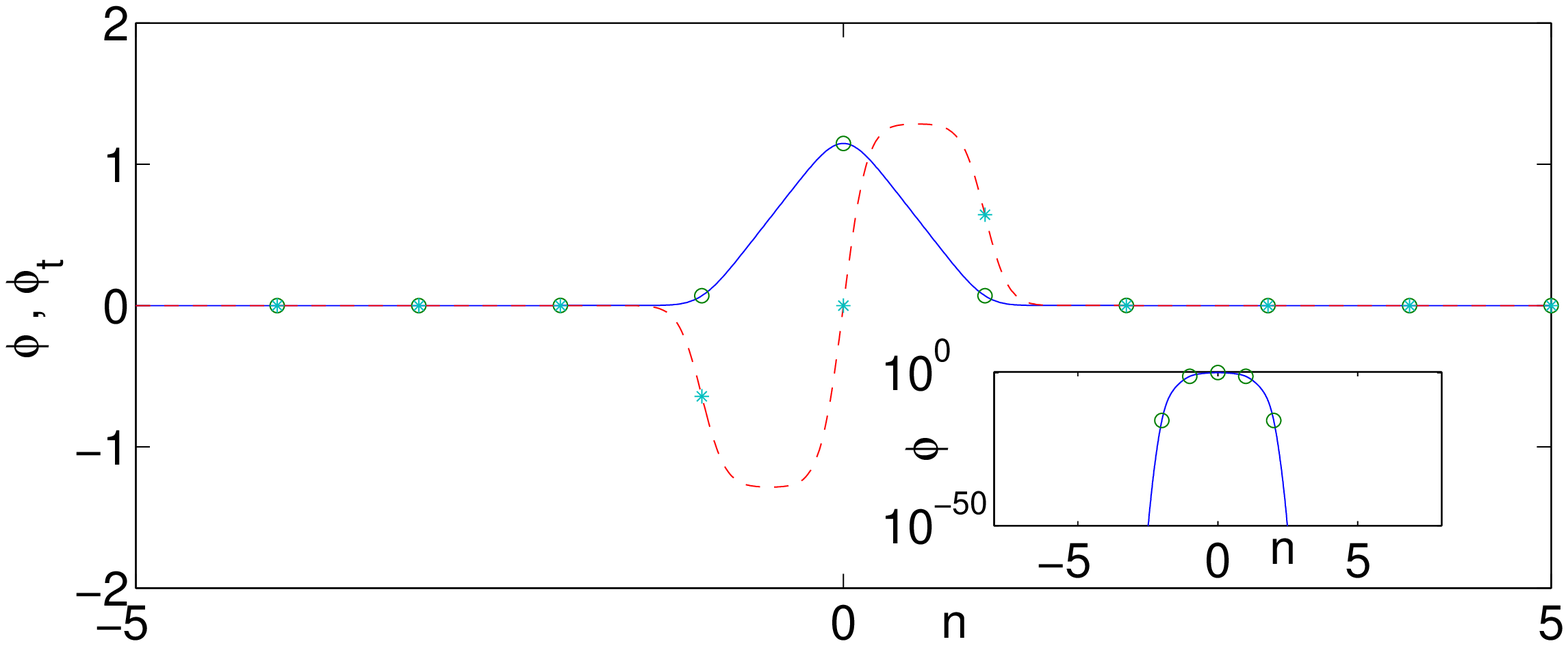}
\caption{Each one of the panels illustrates the numerically exact 
(up to a prescribed tolerance which we set here to $10^{-8}$) solution
profile of the iterative scheme, as renormalized for use
in Eq. (\ref{eqn2}). The solid (blue) line illustrates the
spatial form of the solution and the dashed (red) line the
corresponding momentum (for speed $c=1$). The circles and
stars denote respectively the ordinates of the lattice nodes
(extracted for use in Eq. (\ref{eqn2}). The inset illustrates
the profile in a semilog to highlight the doubly exponential
nature of the decay (notice also the steepening as $p$ increases).
The top left panel is for $p=3/2$, the top right for $p=2$,
the bottom left for $p=3$ and finally the bottom right for $p=10$.
} 
\label{fig1}
\end{figure}

To corroborate the exact nature of such traveling wave solutions, 
once the solution was obtained, then the ``lattice ordinates''
of both the solution and its time derivative were extracted
and inserted as initial conditions for the dynamical evolution
of Eq. (\ref{eqn2}). The results of the relevant time integration
(using an explicit fourth-order Runge-Kutta scheme) are shown
in Fig. \ref{fig3}. It can be straightforwardly observed that
excellent agreement is obtained with the  
 expectation of a genuinely traveling (without
radiation) solution with a speed of $c=1$, so that its center
of mass moves according to $x=c t$ (the solid line in the figure).
This confirms the usefulness of the method (independently of the
nonlinearity exponent $p$, as long as $p>1$) in producing 
accurate traveling solutions for this dynamical system.

\begin{figure}[tbp]
\includegraphics[width=6cm,height=6cm,angle=0,clip]{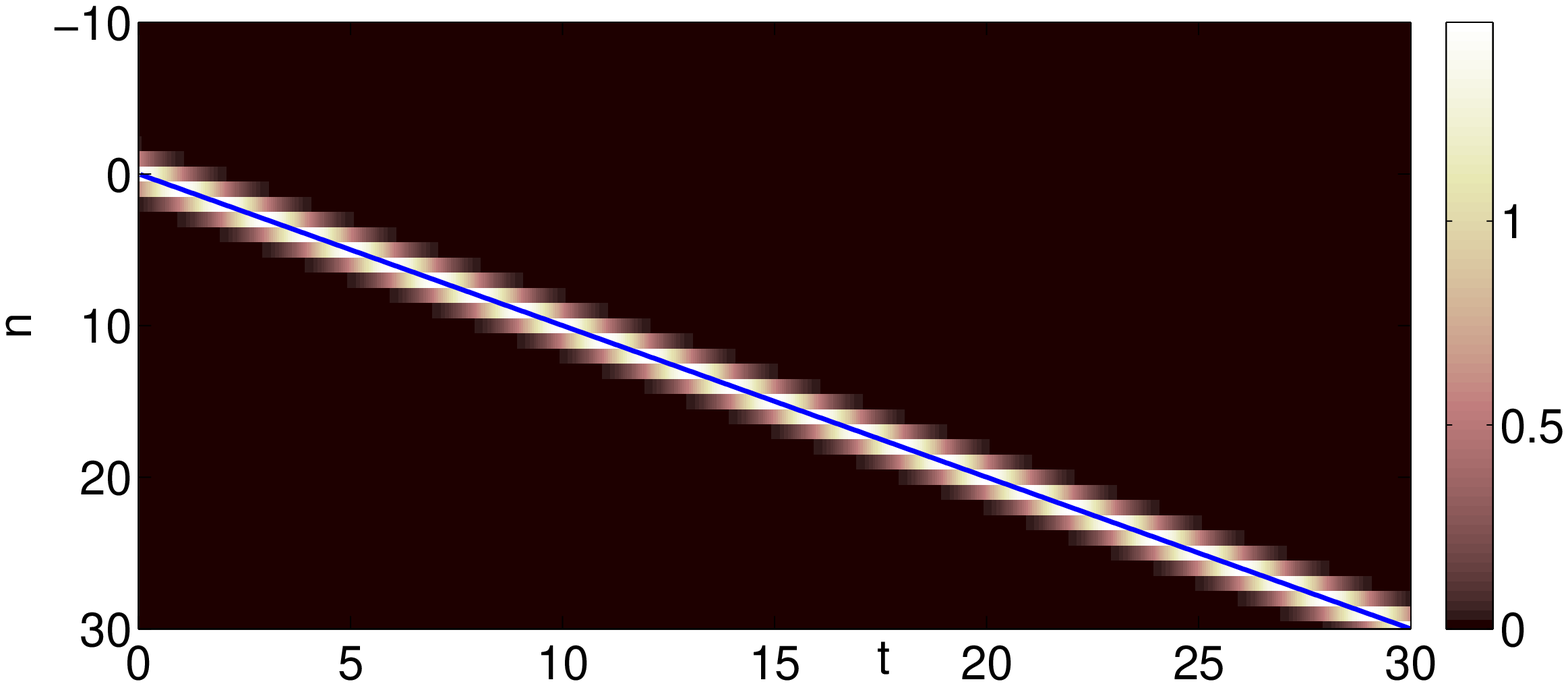}
\includegraphics[width=6cm,height=6cm,angle=0,clip]{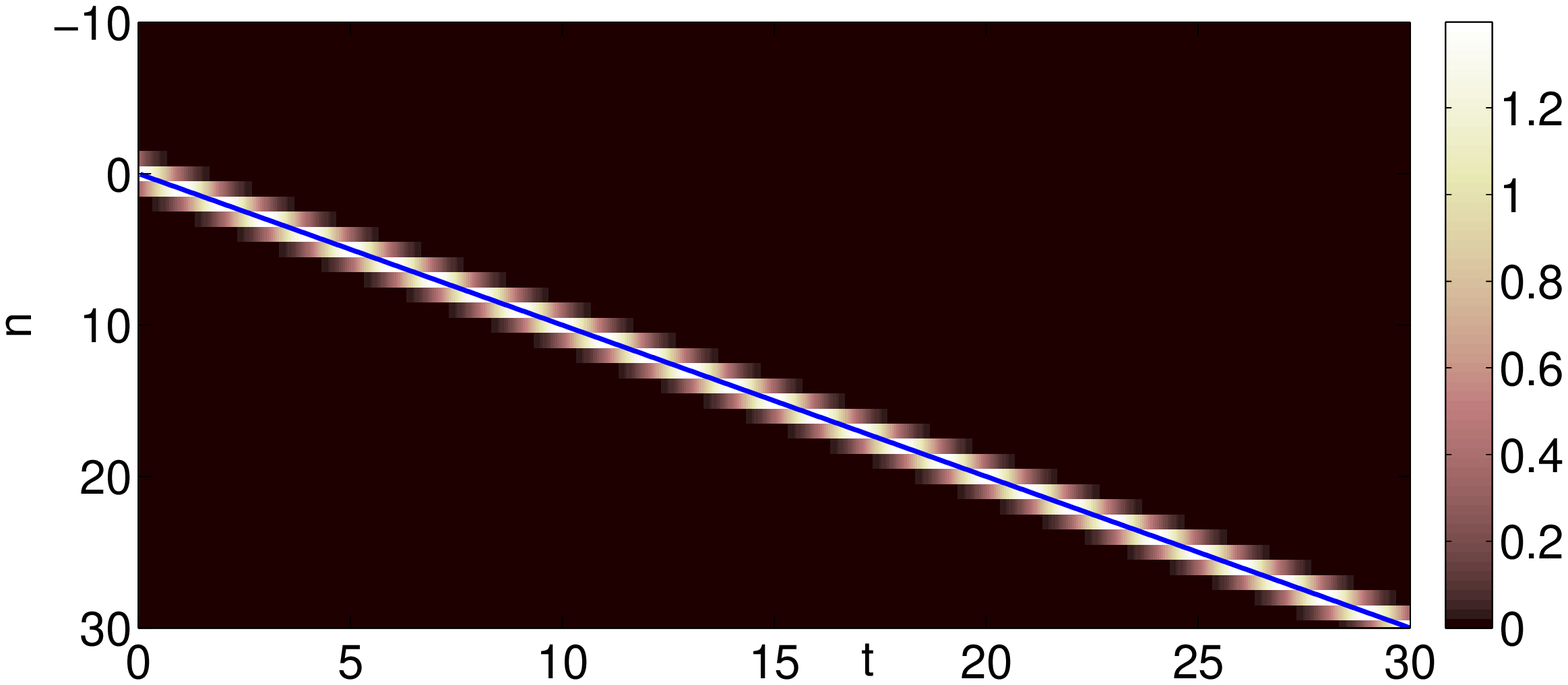}
\includegraphics[width=6cm,height=6cm,angle=0,clip]{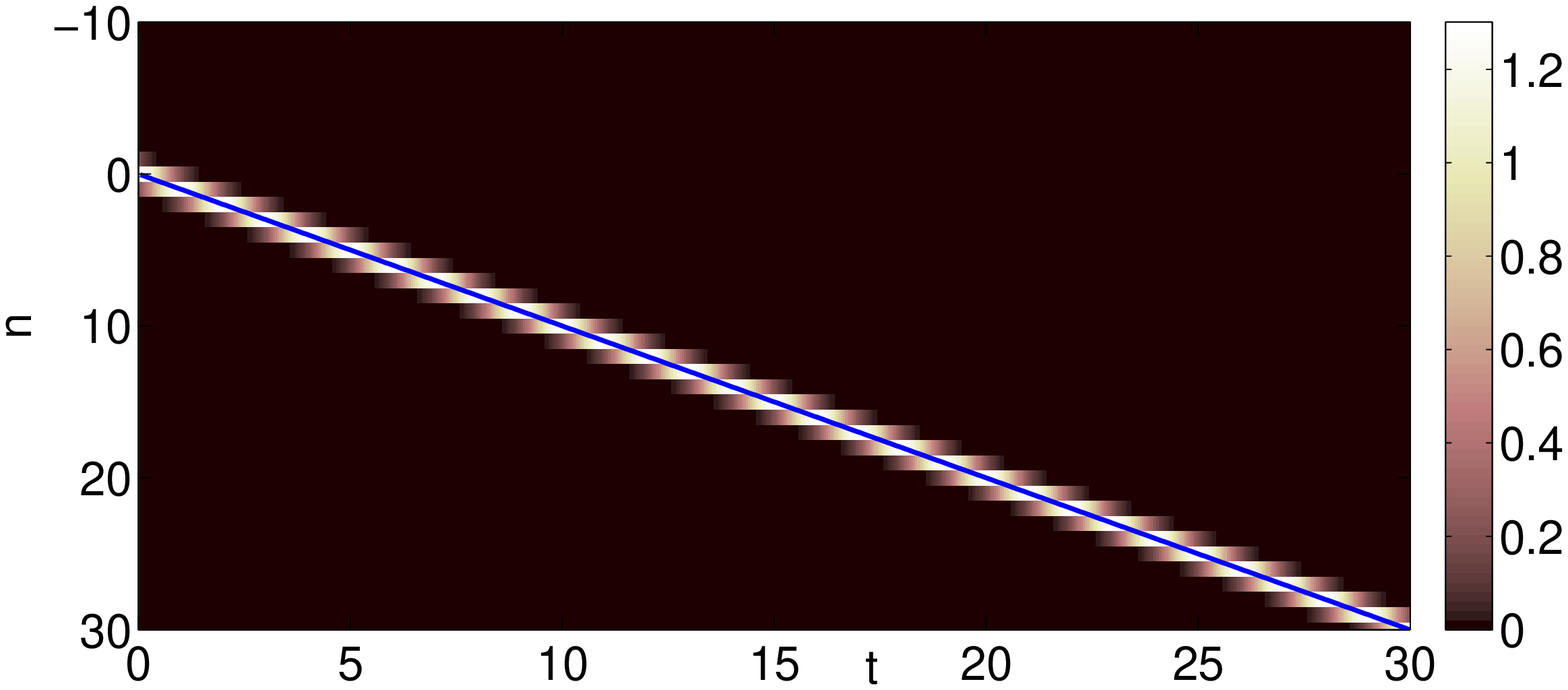}
\includegraphics[width=6cm,height=6cm,angle=0,clip]{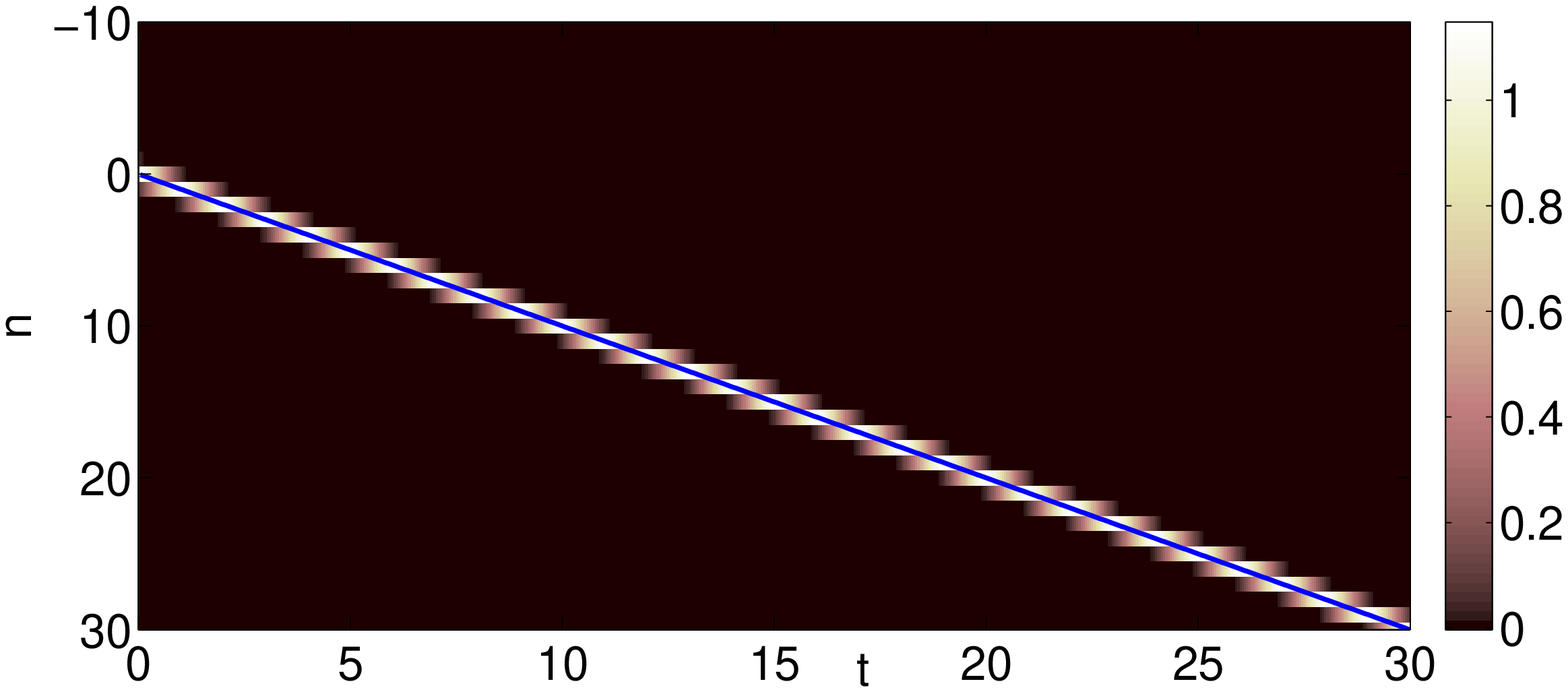}
\caption{Result of direct integration of Eq. (\ref{eqn2}),
with $r_n(0)$ and $\dot{r}_n(0)$ as seeded from the iteration
scheme's convergent profile. The solid line in each case illustrates
the trajectory of $x=c t$ (for $c=1$) to which the solutions
correspond.
One can notice for all values of $p$ ($p=3/2$: top left; $p=2$: top
right;
$p=3$ bottom left and $p=10$ bottom right) the agreement
with the expectation of a genuinely traveling (non-radiating)
waveform of $c=1$.} 
\label{fig3}
\end{figure}

If the convergence to such a nontrivial profile is established
(as we will establish it in section 3 with the proper monotonicity
properties based on our modified variational formulation),
there is an important immediate conclusion about the decay
properties of such a profile. In particular, 
\begin{eqnarray}
u(x+1) =\int_{-1}^1 \Lambda(y) u^p(x+1-y) dy 
\leq % {\rm sup}_{y \in [-1,1]} u^p(x+1-y) \int_{-1}^1 \Lambda(x)=
u^p(x) \Rightarrow u(x+n) \leq u(x)^{p^n},
\label{eqn23}
\end{eqnarray}
Hence, as was originally discussed in \cite{chatter} and then
more rigorously considered in \cite{pego} (see also \cite{pikovsky}),
the solutions of Fig. \ref{fig1} feature a doubly exponential decay.
This very fast decay (and nearly compact shape) of the pulses can be clearly
discerned in the semi-logarithmic plots of the figure.

As a slight aside to the present considerations, we should mention
that a physically relevant variant of the problem consists of
the presence of a finite precompression force $F_0$ at
the end of the chain \cite{nesterenko1,sen08}. In that case, the model
of interest becomes (in the strain formulation and with $F_0=\delta_0^p$)
\begin{eqnarray}
\ddot{u}_n=
[\delta_0 + u_{n+1}]_+^p - 2 [\delta_0 + u_n]_+^p + [\delta_0 + u_{n-1}]_+^p,
\label{eqn2_mod} 
\end{eqnarray}
The case of $\delta_0=0$ constitutes the so-called sonic vacuum 
\cite{nesterenko1}, while that of finite $\delta_0$ features a finite
speed of sound (and allows the existence and propagation of linear spectrum
excitations). It is worthy then to notice that for $\delta_0 \neq 0$,
the above decay estimate is modified as:
\begin{eqnarray}
u(x+1) \sim \delta_0^{p-1} u(x) \Rightarrow u(x) \sim 
\delta_0^{p-1} \exp(n \log{r(x_0)})
\label{precomp}
\end{eqnarray}
Namely, the solutions are no longer doubly exponentially localized but
rather feature an exponential tail (and are progressively closer to
regular solitary waves). This can be thought of as a ``compacton
to soliton'' transition that is worth exploring further (although the
case of $\delta_0 \neq 0$ will not be considered further herein).

\subsection{Some facts and definitions regarding distributions} 
We now turn to several definitions, which will be useful in the sequel. Our space of test functions will be the following. For  $V\subset \rone$ -  an open set, let  
 $\cd(V)=C^\infty_0(V)$ be the set of all $C^\infty$ functions with compact support, contained inside $V$. We equip this with the usual topology of a Frechet space,  generated by a family of seminorms $p_N(f)=\sup_{x\in V_n} \sum_{\al=1}^N |\p_x^\al f(x)|$, where $\{V_n\}_n$ is some fixed nested family of compact sets, so that $\cup_n V_n=V$. 
 The distributions over this space of functions, which we denote by $\cd'(V)$, is its dual space, namely all continuous linear functionals over $\cd(V)$. The derivatives of such distributions are defined in the usual way $\dpr{h'}{\psi}:=-\dpr{h}{\psi'}$. One may also define a convolution of a distribution $h\in \cd(\rone)$ 
 with a given $C^\infty$ function  $\chi$, by $\dpr{h*\chi}{\psi}:=\dpr{h}{\psi*\chi(-\cdot)}$. 
 
 We say that two distributions are in the relation $h_1\leq h_2$ in $\cd'(V)$ sense, if for all $\psi\in \cd(V), \psi\geq 0$, we have $\dpr{h_1}{\psi}\leq \dpr{h_2}{\psi}$. In particular, 
 \begin{definition}
 \label{defi:1}
 We say that a distribution $h$ is non-increasing (non-decreasing)
  over a set $V$, if $h'\leq 0$ ($h'\geq 0$)  in $\cd'(V)$ sense. 
 \end{definition}
 Of course, if $h$ happens to have a locally integrable derivative $h'$ on an interval, 
 then the notion of non-increasing function coincides with the standard (pointwise) notion by the fundamental theorem of calculus. More generally, we have the following 
 \begin{lemma}
 \label{le:p}
 Suppose that $h$ is a locally integrable function in $(a,b)$ and it satisfies $h'\geq 0$ ($h'\leq 0$) in $\cd'(a,b)$ sense. Then, $h$ is almost
everywhere (a.e.) non-decreasing (non-increasing, respectively) function on $(a,b)$. That is, for almost all pairs $x<y$, $h(x)\leq h(y)$ ($h(x)\geq h(y)$ respectively).  
 \end{lemma}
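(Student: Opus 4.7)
The plan is to prove the non-decreasing case (the non-increasing case is identical, or follows by considering $-h$), and to reduce the global statement to a local mollification argument. Since the claim is about almost every pair of points in $(a,b)$, it suffices to show the property on every compact subinterval $[c,d]\subset(a,b)$, and then exhaust $(a,b)$ by a countable family of such subintervals.

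Fix a compact subinterval $[c,d]\subset(a,b)$ and choose a standard nonnegative approximate identity $\rho_\ve\in \cd(\rone)$ with $\rho_\ve\geq 0$, $\int\rho_\ve=1$, and $\mathrm{supp}\,\rho_\ve\subset(-\ve,\ve)$, where $\ve<\min(c-a,b-d)$. Define the mollified function
\[
h_\ve(x)=(h*\rho_\ve)(x)=\int h(y)\rho_\ve(x-y)\,dy,
\]
which is smooth on a neighborhood of $[c,d]$. The first key step is to verify that $(h_\ve)'(x)\geq 0$ pointwise on $[c,d]$. Since differentiation commutes with convolution, one has $(h_\ve)'(x)=(h'*\rho_\ve)(x)$, which, when interpreted in the distributional sense, means $(h_\ve)'(x)=\dpr{h'}{\rho_\ve(x-\cdot)}$; as $\rho_\ve(x-\cdot)\in \cd(a,b)$ is nonnegative, the hypothesis $h'\geq 0$ in $\cd'(a,b)$ forces $(h_\ve)'(x)\geq 0$. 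Hence $h_\ve$ is a smooth, classically non-decreasing function on $[c,d]$.

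The second step is to pass to the limit. By standard properties of mollification, $h_\ve\to h$ in $L^1([c,d])$ as $\ve\to 0^+$. Choose a sequence $\ve_k\to 0$ along which $h_{\ve_k}\to h$ pointwise almost everywhere on $[c,d]$; call the corresponding full-measure convergence set $E_{[c,d]}$. For any two points $x,y\in E_{[c,d]}$ with $x<y$, the classical monotonicity gives $h_{\ve_k}(x)\leq h_{\ve_k}(y)$, and letting $k\to\infty$ yields $h(x)\leq h(y)$.

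Finally, write $(a,b)=\bigcup_n[c_n,d_n]$ as a countable union of compact subintervals and set $E=\bigcap_n E_{[c_n,d_n]}$; the set $E$ has full measure in $(a,b)$, and for every pair $x<y$ in $E$ one has $h(x)\leq h(y)$. In particular, the set of pairs $(x,y)\in (a,b)^2$ with $x<y$ for which the inequality can fail is contained in $(E^c\times(a,b))\cup((a,b)\times E^c)$, which has two-dimensional Lebesgue measure zero. The only mildly delicate point — and the reason one must be careful phrasing the conclusion in an ``almost every pair'' sense — is that a locally integrable $h$ is only an equivalence class, so pointwise monotonicity cannot be asserted at every point; the mollification argument, however, naturally produces the statement on a full-measure subset, which is exactly what the lemma claims.
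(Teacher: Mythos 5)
Your proof is correct, but it takes a genuinely different route from the paper's. The paper works directly at the level of Lebesgue points: it fixes two Lebesgue points $c_1<c_2$ of $h$, builds an explicit continuous piecewise-linear (trapezoidal) test function $\psi$ supported in $(c_1-\de,c_2+\de)$, feeds it into the distributional inequality $\dpr{h'}{\psi}\geq 0$ to obtain
$\int_{c_1-\de}^{c_1}h \leq \int_{c_2}^{c_2+\de}h$,
and then divides by $\de$ and lets $\de\to 0$, invoking the Lebesgue differentiation theorem to recover $h(c_1)\leq h(c_2)$. Your proof instead mollifies $h$, observes that $(h*\rho_\ve)'=\dpr{h'}{\rho_\ve(x-\cdot)}\geq 0$ so each $h_\ve$ is classically non-decreasing, and then passes this monotonicity to the a.e.\ limit along a subsequence with $h_{\ve_k}\to h$ pointwise a.e. Both approaches are sound; the paper's is more elementary (no mollification machinery, just one explicit test function and the definition of a Lebesgue point), while yours is structurally parallel to the paper's own Lemma~\ref{le:kl}, which it would in fact reprove as a byproduct. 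One small point worth tightening in your write-up: the sets $E_{[c_n,d_n]}$ may a priori come from different subsequences of mollification parameters for different $n$, so the intersection $E=\bigcap_n E_{[c_n,d_n]}$ should be set up via a nested exhaustion $[c_n,d_n]\subset[c_{n+1},d_{n+1}]$ together with a diagonal subsequence of $\ve$'s giving a.e.\ convergence simultaneously on all the intervals; as stated the construction is slightly loose, but the fix is routine.
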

 \begin{proof}
 It is well-known by the Lebesgue differentiation theorem 
 that for a locally integrable function $h$, one has $h(x)=\lim_{\de\to 0} \f{1}{\de}\int_{x}^{x+\de} h(y) dy$ $a.e.$ All such points $x$ are called Lebesgue points for $h$. Denote this full measure set by $L$. We will show that for all $a<c_1<c_2<b: 
 c_1, c_2\in L$, we have $h(c_1)\leq h(c_2)$. Indeed, let $\de>0$ be so that 
 $\de<\min(c_2-c_1, c_1-a, b-c_2)$. Define a function $\psi$, 
 $$
 \psi=\left\{ \begin{array}{l l} 
 0 & x<c_1-\de\\
 x-(c_1-\de) & c_1-\de<x\leq c_1 \\
 \de & c_1<x\leq c_2 \\
 c_2+\de-x & c_2<x\leq c_2+\de \\
 0 & c_2+\de\leq x.
\end{array} 
 \right. 
 $$
 Clearly $\psi\geq 0$ is not smooth, but is continuous and it can be approximated well by test functions. Moreover $\psi'=1$ on $(c_1-\de, c_1)$, $\psi'(x)=-1$ on 
 $(c_2, c_2+\de)$ and zero otherwise. Since $0\leq \dpr{h'}{\psi}=-\dpr{h}{\psi'}$ (and these are well-defined quantities), we obtain 
 $$
 \dpr{h}{\psi'}=\int_{c_1-\de}^{c_1} h(y) dy - \int_{c_2}^{c_2+\de} h(y) dy\leq 0. 
 $$
 This is true for all $\de>0$, which are sufficiently small. Thus, dividing by $\de>0$ and taking limit as $\de\to 0+$ (and taking into account that both $c_1, c_2$ are Lebesgue points), we conclude that $h(c_1)\leq h(c_2)$. 
 \end{proof}
 We find the following trick useful, which allows us to  reduce   
 non-increasing/non-decreasing distrubutions to non-increasing/non-decreasing functions. 
 
More precisely, let us fix a positive even  $C_0^\infty$ function $\Phi$, so that $supp\  \Phi\subset (-1,1)$ and 
  $\int \Phi(x) dx=1$. Let $\Phi_\de(x):=\f{1}{\de}\Phi(\f{x}{\de})$ and define $h_\de=h*\Phi_\de$. The following lemma has a  standard proof. 
  \begin{lemma}
  \label{le:kl}
  Let $h$ be a non-increasing (non-decreasing) distribution. Then for every $\de>0$, 
  $h_\de$ is a $C^\infty$ function, which is non-increasing (non-decreasing respectively). Moreover $h_\de\to h$ in the sense of distributions. 
  \end{lemma}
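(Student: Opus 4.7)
The plan is to establish the three assertions (smoothness, monotonicity, and convergence) in turn using standard mollifier arguments, with the main new ingredient being the translation of the distributional inequality $h' \leq 0$ through the convolution with $\Phi_\de$.

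First I would verify smoothness. Since $\Phi_\de\in C_0^\infty(\rone)$, the convolution $h_\de(x)=\dpr{h(\cdot)}{\Phi_\de(x-\cdot)}$ is well-defined as a number for every $x\in\rone$, and differentiation under the distributional pairing (justified by differentiating the difference quotient of $\Phi_\de(x-\cdot)$, which converges in $\cd(\rone)$ because $\Phi_\de$ is smooth and compactly supported) gives $h_\de\in C^\infty(\rone)$ together with the formula $(h_\de)'(x)=\dpr{h}{\p_x\Phi_\de(x-\cdot)}=\dpr{h'}{\Phi_\de(x-\cdot)}$, the last equality being the definition of the distributional derivative.

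Next I would prove the monotonicity transfers. Fix $x\in\rone$ and set $\psi_x(y):=\Phi_\de(x-y)$. Since $\Phi_\de\geq 0$, $\psi_x\in C_0^\infty(\rone)$ is a non-negative test function, so the assumption $h'\leq 0$ in $\cd'(\rone)$ gives $(h_\de)'(x)=\dpr{h'}{\psi_x}\leq 0$ pointwise for every $x$. Since $h_\de$ is smooth with non-positive derivative, it is non-increasing in the classical sense. The non-decreasing case is identical with the inequalities reversed.

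Finally I would verify distributional convergence. For any $\psi\in\cd(\rone)$, the identity $\dpr{h_\de}{\psi}=\dpr{h*\Phi_\de}{\psi}=\dpr{h}{\psi*\Phi_\de(-\cdot)}=\dpr{h}{\psi*\Phi_\de}$ holds, using that $\Phi_\de$ is even. The standard approximation-of-the-identity argument shows $\psi*\Phi_\de\to\psi$ in $\cd(\rone)$ as $\de\to 0+$: the supports are all contained in a fixed compact neighborhood of $supp\ \psi$, and $\p^\al(\psi*\Phi_\de)=(\p^\al\psi)*\Phi_\de\to \p^\al\psi$ uniformly. Continuity of $h$ on $\cd(\rone)$ then yields $\dpr{h_\de}{\psi}\to\dpr{h}{\psi}$, which is precisely $h_\de\to h$ in $\cd'(\rone)$.

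There is no real obstacle here; the only point requiring any care is ensuring the formula $(h_\de)'=\dpr{h'}{\Phi_\de(x-\cdot)}$, which is what transports the distributional sign condition on $h'$ to a pointwise sign condition on a classical derivative, and this is handled by the standard argument that convolution commutes with differentiation in the distributional sense.
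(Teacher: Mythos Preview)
Your argument is correct and is precisely the standard mollifier computation the paper has in mind; indeed the paper does not give a proof at all, simply noting that ``the following lemma has a standard proof.'' There is nothing to compare.
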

 \noindent  Next 
  \begin{definition}
  \label{defi:4} 
We say that a \underline{distribution $h\in \cd'(\rone)$ is  bell-shaped}, if $h$ is non-decreasing in  $\cd'(-\infty,0)$ and $h$ is non-increasing in $\cd'(0,\infty)$. 
  \end{definition}
In the sequel, we   need the following technical result. 
\begin{lemma}
\label{le:3}
Suppose that $f$ is an even distribution, so that $f$ is non-increasing in 
$\cd'(0,\infty)$ and non-decreasing in $\cd'(-\infty,0)$. Then   $\La*f$ is non-increasing in $(0,\infty)$ and non-decreasing in $(-\infty,0)$. 
 
Assume that $g$ is non-increasing in $\cd'(a,\infty)$ for some $a\in\rone$. Then $\La*g(x)$ is non-increasing in 
$(a+1,\infty)$. 

Assume that $h$ is non-decreasing in $(-\infty,a)$. Then, $\La*h$ is non-decreasing in 
$(-\infty,a-1)$. 
\end{lemma}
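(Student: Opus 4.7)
The proof rests on the identity $(\La * g)' = \La * g'$ and the duality
\begin{equation*}
\langle (\La * g)', \psi \rangle = \langle g', \La * \psi \rangle,
\end{equation*}
which holds for any $\psi \in \cd(\rone)$ because $\La$ is even. In each part the strategy is the same: transfer the convolution from the distribution onto the test function, then recognize $\La * \psi$ as a valid non-negative test function in the region where the sign hypothesis on the derivative is available.

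For Part 2, fix $\psi \in \cd(a+1,\infty)$ with $\psi \geq 0$. Since $\La$ is supported in $[-1,1]$ and $\psi$ has compact support inside $(a+1,\infty)$, the convolution $\La*\psi$ is smooth, non-negative, and compactly supported in $(a,\infty)$; hence $\La * \psi \in \cd(a,\infty)$. The hypothesis $g' \leq 0$ in $\cd'(a,\infty)$ then yields $\langle g', \La*\psi\rangle \leq 0$, i.e.\ $\langle (\La*g)', \psi\rangle \leq 0$, which is precisely the assertion that $\La*g$ is non-increasing on $(a+1,\infty)$. Part 3 is the mirror argument with $\pm\infty$ interchanged.

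For Part 1, $\La*f$ is automatically even, so it suffices to show that it is non-increasing on $(0,\infty)$. Take $\psi \in \cd(0,\infty)$ with $\psi \geq 0$ and set $\eta := \La*\psi$. Now the support of $\eta$ lies in $(-1,\infty)$ and \emph{straddles} the origin, so the one-sided trick from Part 2 fails. The fix is to use evenness of $f$ (equivalently, oddness of the distribution $f'$) to replace $\eta$ by its antisymmetric part $\eta_a(y) := \tfrac12[\eta(y)-\eta(-y)]$, giving $\langle f',\eta\rangle = \langle f',\eta_a\rangle$. A short calculation using the evenness of $\La$ yields
\begin{equation*}
\eta(y) - \eta(-y) = \int_0^\infty [\La(y-s) - \La(y+s)]\,\psi(s)\,ds,
\end{equation*}
and for $y,s > 0$ the inequality $|y-s| \leq y+s$ combined with the pointwise monotonicity of $\La(t) = (1-|t|)_+$ on $[0,\infty)$ forces $\La(y-s) \geq \La(y+s)$. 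Hence $\eta_a \geq 0$ on $(0,\infty)$ and, by oddness, $\eta_a \leq 0$ on $(-\infty,0)$.

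To conclude $\langle f',\eta_a\rangle \leq 0$, introduce a smooth even cut-off $\rho_\varepsilon$ vanishing on $(-\varepsilon/2,\varepsilon/2)$ and equal to $1$ outside $(-\varepsilon,\varepsilon)$. The bulk $\rho_\varepsilon\eta_a$ splits as $\eta_a^+ - \eta_a^-$ with $\eta_a^+ \in \cd(0,\infty)$ and $\eta_a^- \in \cd(-\infty,0)$ both non-negative, so the two sign hypotheses on $f'$ give $\langle f',\eta_a^+\rangle \leq 0$ and $\langle f',\eta_a^-\rangle \geq 0$, whose combination is $\leq 0$. The remainder $r_\varepsilon := (1-\rho_\varepsilon)\eta_a$ is supported in $(-\varepsilon,\varepsilon)$ and, because $\eta_a$ is smooth and odd (so $\eta_a(0)=0$), both $r_\varepsilon$ and $r_\varepsilon'$ tend to $0$ in $L^1$; this forces $\langle f',r_\varepsilon\rangle \to 0$ under the mild hypothesis $f \in L^1_{\loc}$ that covers the intended application with $f = u^p$. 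The main obstacle is precisely this last step: $\eta_a$ crosses the origin -- the point where the sign of $f'$ flips -- so the split into half-line pieces comes with an error that must be absorbed in the limit via $\eta_a(0) = 0$, a complication the one-unit buffer in the support of $\La$ spares us in Parts 2 and 3.
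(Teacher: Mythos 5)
Your argument is correct on locally integrable inputs and takes a genuinely different route from the paper's. You dualize: the identity $\langle (\La*g)',\psi\rangle = \langle g', \La*\psi\rangle$ (valid since $\La$ is even) moves the convolution onto the test function, and the one-unit buffer in Parts 2 and 3 is then exactly the price of $\mathrm{supp}\,\La\subset[-1,1]$. For Part 1 you exploit the oddness of $f'$ to replace $\La*\psi$ by its antisymmetric part $\eta_a$, whose sign on each half-line follows from $|y-s|\leq y+s$ together with the monotonicity of $\La$ on $[0,\infty)$, and then handle the origin by a cut-off. The paper's proof is the ``primal'' counterpart: it first invokes the mollification Lemma~\ref{le:kl} to reduce to $C^\infty$ functions, reads everything off the pointwise derivative formula $(\La*g)'(x) = \int_x^{x+1} g - \int_{x-1}^x g$, and in Part 1 closes the estimate for $0<x<1$ by the change-of-variables comparisons $\int_0^x f \geq \int_x^{2x} f$ and $\int_0^{1-x} f \geq \int_{2x}^{x+1} f$. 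Both arguments run on the same three structural facts about $\La$ (evenness, unit support, concavity on its support), but attack opposite slots of the duality pairing.

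The one real gap is the extra hypothesis $f\in L^1_{\loc}$ in Part 1: the lemma is stated for distributions, and your remainder estimate $|\langle f', r_\ve\rangle| = |\langle f, r_\ve'\rangle| \leq \|r_\ve'\|_{L^\infty}\int_{-\ve}^\ve |f|\to 0$ genuinely uses local integrability (for $f=\de_0$, which satisfies the hypotheses, one finds $\langle f',r_\ve\rangle = -\eta_a'(0)$, a nonzero constant, so the intermediate claim fails even though the final sign still works out by accident). Two remarks on this. First, the fix is the same one the paper uses: prepend the reduction to smooth functions via Lemma~\ref{le:kl}, which makes your $L^1_{\loc}$ step costless; alternatively note that in the paper's application $f$ is always an $L^q$ function. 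Second, the phrasing ``$r_\ve'$ tends to $0$ in $L^1$'' is slightly off the mark — what your estimate really uses is that $\|r_\ve'\|_{L^\infty}$ stays bounded while $\mathrm{supp}\,r_\ve'\subset(-\ve,\ve)$ shrinks, combined with the absolute continuity of $\int|f|$; this is what $\eta_a(0)=0$ buys you, since it keeps $\rho_\ve'\eta_a$ bounded despite $\|\rho_\ve'\|_\infty\sim\ve^{-1}$.
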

\begin{proof}
By Lemma \ref{le:kl}, it suffices to consider functions instead of ditributions with the said properties. 
We have the following computation for the derivative of the function $\La*g$
\begin{equation}
\label{m1}
(\La*g)'(x)=\int_{x}^{x+1} g(y) dy- \int_{x-1}^x g(y) dy
\end{equation}
which follows by differentiating \eqref{2}. 

It is immediate from \eqref{m1} that the claims for the functions $\La*g$ and $\La*h$ hold true.
 Regarding $\La*f$, it is clear that $\La*f$ is non-increasing in $(1,\infty)$ and non-decreasing in $(-\infty,-1)$. 
 Thus, we need to show that for  $x\in (0,1)$, $(\La*g)'(x)\leq 0$ and for $x\in (-1,0)$, $(\La*g)'(x)\geq 0$. We only verify this for $x\in (0,1)$, since the other inequality follows in a similar manner. Indeed, using $f(x)=f(-x)$, 
 $$
 (\La*f)'(x)=\int_{x}^{x+1} f(y) dy- \int_{x-1}^x f(y) dy=\int_{x}^{x+1} f(y) dy- (\int_0^x f(y) dy+\int_0^{1-x} f(y) dy).
 $$
 Since $f$ is non-increasing in $(0,\infty)$, 
 \begin{eqnarray*}
 & & 
 \int_0^x f(y) dy\geq \int_0^x f(y+x)dy=\int_x^{2x} f(y) dy, \\
 & & \int_0^{1-x} f(y)dy\geq \int_0^{1-x} f(y+2x)dy=\int_{2x}^{x+1} f(y) dy
 \end{eqnarray*}
 Going back to the expression for $(\La*f)'(x)$, this implies $(\La*f)'(x)\leq 0$, which was the claim. 
\end{proof}
 We will also need the following multiplier in our considerations 
 $$
 \widehat{Q f}(\xi)=\f{\sin(\pi \xi)}{\pi \xi} \hat{f}(\xi).
 $$
 It is actually easy to see that since $ \widehat{\chi_{[-\f{1}{2},\f{1}{2}]}}(\xi)= \f{\sin(\pi \xi)}{\pi \xi}$, we have also the representation 
 \begin{equation}
 \label{p:3}
 Q f(x)=\int_{x-1/2}^{x+1/2} f(y) dy. 
 \end{equation} 
 Noted that by our definition of the operator $\cm$, we have $\cm=Q^2$.

\section{Statement and Proof of the Main Result}

\begin{theorem}
\label{theo:1}
The equation \eqref{1} has a positive solution $u$, so that 
\begin{itemize} 
\item $u$ is even, 
\item $u$ is bell-shaped, 
\item $u\in H^\infty(\rone)$ 
\end{itemize} 

\end{theorem}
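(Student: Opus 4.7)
I would establish Theorem \ref{theo:1} by a constrained variational argument in the English--Pego formulation. Define
\[
J(u) := \int_{\rone} (\La*u^p)(x)\,u^p(x)\,dx \;=\; \|Q[u^p]\|_{L^2}^{2},
\]
and maximize $J$ over the admissible class
\[
X := \bigl\{\,u \in L^{p+1}(\rone) : u\geq 0,\ u\text{ even and bell-shaped},\ \|u\|_{L^{p+1}}^{p+1} = 1\,\bigr\}.
\]
A formal Lagrange multiplier gives the Euler--Lagrange condition $\La*u^p = c\,u$ for some $c>0$; the rescaling $u\mapsto \la u$ with $\la=c^{-1/(p-1)}$ then produces a solution of \eqref{1}. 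The supremum $\sigma:=\sup_X J$ is finite by Young's inequality: since $\La$ is compactly supported, $\La\in L^{(p+1)/2}$ (note $(p+1)/2>1$ for $p>1$), yielding $J(u)\leq\|\La\|_{L^{(p+1)/2}}$; positivity of $\sigma$ is immediate by testing on any normalized bell-shaped profile.

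The principal obstacle is compactness: $\rone$ is noncompact and the constraint $\|u\|_{L^{p+1}}=1$ a priori permits vanishing, concentration, or escape of mass. Two ingredients tame these. First, bell-shapedness combined with the constraint yields the pointwise decay $u(|x|)\leq (2|x|)^{-1/(p+1)}$, obtained from $1=\int u^{p+1}\geq 2|x|\,u(|x|)^{p+1}$. Second, for a maximizing sequence $\{u_n\}\subset X$, Helly's selection theorem applied to the non-increasing restrictions $u_n|_{[0,\infty)}$ extracts a subsequence (not relabeled) with $u_n\to u_*$ pointwise a.e., where $u_*$ is non-negative, even, bell-shaped, and $\|u_*\|_{L^{p+1}}\leq 1$ by Fatou. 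To exclude $u_*\equiv 0$, I decompose $J(u_n)$ into three zones $|x|<\delta$, $\delta\leq|x|\leq R$, $|x|>R$: the far tail is $O(R^{-(p-1)/(p+1)})$ from the decay estimate (the exponent is positive because $p>1$ gives $2p/(p+1)>1$); the near-origin piece is $O(\delta^{1/(p+1)})$ uniformly in $n$ by H\"older against the normalization together with $\|\La*u_n^p\|_{L^\infty}\leq\|\La\|_{L^{p+1}}$; and on the middle annulus the tail bound $(2\delta)^{-1/(p+1)}$ serves as a uniform pointwise dominant, so dominated convergence (using $u_n^p\to 0$ a.e.\ under the vanishing hypothesis) forces that contribution to zero. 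Sending $n\to\infty$, then $\delta\to 0$, then $R\to\infty$ would yield $J(u_n)\to 0$, contradicting $J(u_n)\to\sigma>0$; hence $u_*\not\equiv 0$.

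The same three-zone decomposition, reused to compare $J(u_n)$ with $J(u_*)$ via dominated convergence on the middle annulus, delivers $J(u_*)\geq\sigma$. The homogeneity imbalance ($J$ is $2p$-homogeneous, the constraint $(p+1)$-homogeneous, with $2p>p+1$) then forces $\|u_*\|_{L^{p+1}}=1$: otherwise $u_*/\|u_*\|_{L^{p+1}}\in X$ would strictly exceed $\sigma$. Invoking the Riesz rearrangement inequality (which upgrades $\sup_X J$ to $\sup_{\{u\geq 0,\,\|u\|_{p+1}=1\}}J$ without loss), $u_*$ maximizes $J$ over the larger class, and Lagrange multipliers yield $\La*u_*^p=c\,u_*$ in $\cd'(\rone)$. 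Rescaling delivers $u$ solving \eqref{1}; evenness and bell-shapedness transfer from $u_*$ (consistent with Lemma \ref{le:3}, which shows $\La*$ preserves the bell-shaped class), and strict positivity of $u$ follows because $\La>0$ on $(-1,1)$ propagates the support of $u^p$ in steps of length one under iteration, precluding compact support. Finally, $H^\infty$-regularity is a standard bootstrap: $u\in L^{p+1}\cap L^\infty$ (the latter from $u=\La*u^p$ and $\|\La\|_{L^{p+1}}<\infty$ via Young), together with the decay $|\wh\La(\xi)|\leq C(1+|\xi|)^{-2}$, gains two derivatives of $L^2$-regularity per application of $u=\La*u^p$, and the 1D Sobolev algebra property of $H^s$ for $s>1/2$ (preserving $H^s$ under $u\mapsto u^p$) closes the iteration to place $u\in\bigcap_{s\geq 0}H^s$.
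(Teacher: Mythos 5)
Your proposal starts from the same variational heart as the paper --- maximize $\|Q[u^p]\|_{L^2}^2 = \int(\Lambda*u^p)u^p$ subject to $\|u\|_{L^{p+1}}^{p+1}=1$ over bell-shaped nonnegative even profiles --- but the route from there is genuinely different and, in two places, substantially shorter.

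\emph{Compactness.} The paper achieves compactness by truncating the quadratic form to $\int_{-\ve^{-1}}^{\ve^{-1}}|Qv|^2$ so that the embedding $W^{1,q}(\mathbf{R})\Subset L^2(-\ve^{-1},\ve^{-1})$ is compact, and then takes $\ve\to 0$ at the very end. You avoid the truncation parameter entirely: bell-shapedness plus normalization gives the pointwise bound $u(|x|)\le(2|x|)^{-1/(p+1)}$, Helly's selection theorem (applied on $[\delta,M]$ and diagonalized) produces an a.e.\ limit $u_*$ that is automatically even and bell-shaped, and the three-zone estimate with the decay bound recovers strong convergence of the quadratic form. This is a real simplification: no $\ve$-limit, no auxiliary compact embedding, and no double subsequence extraction. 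Your three-zone estimates check out ($I_{\mathrm{far}}\lesssim R^{-(p-1)/(p+1)}$ because $2p/(p+1)>1$; $I_{\mathrm{near}}\lesssim\delta^{1/(p+1)}$ by H\"older and Young; the middle piece passes to the limit by dominated convergence with the dominating weight $(2|x-y|)^{-p/(p+1)}\in L^1_{\mathrm{loc}}$).

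\emph{Euler--Lagrange and the bell-shaped constraint.} The paper does not pass to the unconstrained class; it derives the EL equation directly under the monotonicity constraint, which requires the elaborate analysis of the set $\Omega$ where $v'<0$, the treatment of constant plateaus, and the proof that $\Omega^c$ is a set of isolated points --- several pages of work. Your use of the Riesz rearrangement inequality (with $\Lambda^*=\Lambda$ and $(u^p)^*=(u^*)^p$) to upgrade the bell-shaped maximizer to a maximizer of the unconstrained problem on the $L^{p+1}$-sphere bypasses all of that. That is a clean and legitimate shortcut, and in my view the most interesting divergence from the paper's argument.

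\emph{The gap.} The Lagrange multiplier step is not quite right as written, and the positivity argument that is meant to patch it is circular. If you parametrize by $u$, the Euler--Lagrange identity you actually obtain is
\[
\bigl(\Lambda*u_*^p - c\,u_*\bigr)\,u_*^{p-1}=0 \quad\text{a.e.},
\]
which is vacuous on the set $\{u_*=0\}$ because both $\nabla J(u)=2p(\Lambda*u^p)u^{p-1}$ and $\nabla I(u)=(p+1)u^p$ vanish there for $p>1$. So the equation $\Lambda*u_*^p=cu_*$ is only known on $\{u_*>0\}$, and the ``support propagates by steps of length one'' argument presupposes the equation holds on all of $\mathbf{R}$ --- precisely what is at stake. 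The fix is either (i) to mimic the paper and change variables to $v=u^p$ (so $I(v)=\int v^{q}$ with $q=(p+1)/p$); then $\nabla_vJ=2\Lambda*v$ does \emph{not} vanish on $\{v=0\}$, the EL condition $\Lambda*v_*=c\,v_*^{q-1}$ holds a.e., and the positivity argument closes immediately because $\Lambda*v_*>0$ on any $1$-neighborhood of $\mathrm{supp}\,v_*$; or (ii) to keep the $u$-variable and show that $u_*$ cannot have compact support by a direct perturbation: if $\mathrm{supp}\,u_*=[-a,a]$ and $\phi\ge0$ is supported in $(a,a+\tfrac12)$, then $J\bigl((u_*+t\phi)/\|u_*+t\phi\|_{p+1}\bigr)=\sigma+2t^p\int(\Lambda*u_*^p)\phi^p-\tfrac{2p}{p+1}\sigma\,t^{p+1}\|\phi\|_{p+1}^{p+1}+o(t^{p+1})$, and since $p<p+1$ the $t^p$ gain dominates the normalization cost, contradicting maximality. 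Either repair is short, but it must be made explicit; without it the proof does not yet deliver the equation on all of $\mathbf{R}$, hence neither the positivity nor the bootstrap.

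The remaining pieces (finiteness of $\sigma$ via Young, the $2p$ versus $p+1$ homogeneity imbalance giving $\|u_*\|_{p+1}=1$, and the regularity bootstrap using $|\widehat\Lambda(\xi)|\lesssim(1+|\xi|)^{-2}$) are sound once positivity is in hand.
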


%\section{Proof of Theorem \ref{theo:1}} 
Notice that the fundamental contribution of our work to the setting
of generalized Hertzian lattices is the 
characterization of the (monotonically decaying on each side)
nature of the traveling waves.
We first explain the possible approaches to this problem. One approach is to use the form \eqref{2}, to show that the map $\phi\to \La* \phi^p$ has a fixed point. We shall pursue a different route, more in line with the work of
Friesecke and Wattis \cite{fw}.
Namely, we shall consider a constrained minimization problem associated   with \eqref{1}. We should note here, that the existence result of Friesecke-Wattis 
is based on the equivalent via the change of variables $u(x)=z(x-1)-z(x)$
(as discussed in section 1)  formulation 
\begin{equation}
\label{5}
z''(x)=[z(x-1)-z(x)]_+^p-[z(x)-z(x+1)]_+^p.
\end{equation}
%One should note that \eqref{1} is equivalent to \eqref{5} via the change of variables \\ $u(x)=z(x-1)-z(x)$ (here $u$ is the so-called strain variable). 

We will in fact consider a different representation of \eqref{p:1}. To  that end, 
introduce a positive function $w: w^{1/p}=u$, whence \eqref{p:1} reduces\footnote{Note that this is a good transformation, since we are interested in positive solutions of \eqref{p:1}.}   to 
$$
(w^{1/p})''=\De_{disc}(w). 
$$
This is easily seen to be equivalent to $\widehat{w^{1/p}}(\xi)=\f{\sin^2(\pi \xi)}{\pi^2\xi^2} \hat{w}(\xi)$. Undoing the Fourier transform yields 
\begin{equation}
\label{p:2}
w^{1/p}=\cm w= Q^2 w= \La * w.
\end{equation}
Thus, we need to find a solution $w$ to \eqref{p:2}, which is as stated in Theorem \ref{theo:1}. 

Let $\ve>0$  be chosen appropriately small momentarily. 
Let $q=1+\f{1}{p}\in (1,2)$ and consider the following constrained optimization problem 
\begin{equation}
\label{7}
\left| 
\begin{array}{c}
J_\ve(v)=\int_{-\ve^{-1}}^{\ve^{-1}}|Q v(x)|^2 dx   \to \max \\
\textup{subject to} \ \  I(v)=\int_{\rone} v^q(x)dx =1,  \\ 
v\geq 0,  v - \textup{bell-shaped} 
\end{array}
\right.
\end{equation}
Solving \eqref{7} will eventually lead us to a solution of \eqref{p:2}. 
\subsection{Constructing a maximizer for \eqref{7}} 
\label{sec:3.1}
Let us first show that the expression $J_\ve(v)$ is bounded from above, if $v$ satisfies the constraints. Indeed, we have by Sobolev embedding 
\begin{eqnarray*}
J_\ve(v)\leq \|Q v\|_{L^2}^2\leq C_q \|Q v\|_{\dot{W}^{1/q-1/2,q}}^2. 
\end{eqnarray*}
By Gagliardo-Nirenberg inequality, we further bound 
$$ 
\|Q v\|_{\dot{W}^{1/q-1/2,q}}\leq \|Q v\|_{\dot{W}^{1,q}}^{1/q-1/2}\|Q v\|_{L^q}^{3/2-1/q}. 
$$
But by the definition of $Q$ (see \eqref{p:3}), we have 
\begin{eqnarray*}
& &  \|Q v\|_{\dot{W}^{1,q}}= \|\p_x[Q v]\|_{L^q}=
\|v(\cdot+1/2)-v(\cdot-1/2)\|_{L^q}  \leq 2 \|v\|_{L^q}=2,  \\
& & \|Q v\|_{L^q}^q\leq \int_{-\infty}^\infty (\int_{x-1/2}^{x+1/2} v(y) dy)^q dx\leq \|v\|_{L^q}^q=1. 
\end{eqnarray*}
Thus, $J_\ve(v)\leq 2 C_q$ and hence $J^{\max}_\ve =\sup_{v} J(v)\leq 2 C_q$, where $C_q$ is the square of the Sobolev embedding constant 
$\dot{W}^{1/q-1/2,q}(\rone)\hookrightarrow  L^2(\rone) $. 

We will now select $\ve_0$ so small that  $J^{\max}_\ve =\sup_{v} J(v)$ (which we have just shown exists) is positive for all $0<\ve<\ve_0$. Indeed, take $v_0(x)=c_q e^{-x^2}$, so that $c_q^q\int_{-\infty}^\infty e^{-q x^2} dx=1$. Thus, the function $v_0$ satisfies the constraints and therefore 
$$
J^{\max}_\ve\geq J(v_0)= c_q^2 \int_{-\ve^{-1}}^{\ve^{-1}} |\int_{x-1/2}^{x+1/2}
e^{-2 y^2} dy|^2 dx.
 $$
 Clearly, as $\ve\to 0$, we have that the right hand side converges to \\  
 $ c_q^2 \int_{-\infty}^{\infty} |\int_{x-1/2}^{x+1/2}
e^{-2 y^2} dy|^2 dx>0$. Thus, there exists $\ve_0$, so that for all $\ve\in (0,\ve_0)$, $J^{\max}_\ve>0$. In fact, we can select the $\ve_0$, so that 
 \begin{equation}
\label{p:55}
J^{\max}_\ve\geq \f{c_q^2}{2} \int_{-\infty}^{\infty} 
|\int_{x-1/2}^{x+1/2} e^{-2 y^2} dy|^2 dx,
 \end{equation}
 whenever $\ve<\ve_0$. 
For the rest of this section, fix $\ve<\ve_0$. Construct a maximizing sequence $v^n\in L^q(\rone):\|v^n\|_{L^q}=1$, so that $J_\ve(v^n)\to J^{\max}_\ve$. 
Namely, we take $v^n$ satisfying the constraints, so that $J_\ve(v^n)>J^{\max}_\ve-\f{1}{n}$. We only consider $n$ large enough, so that $\f{1}{n}<J^{\max}_\ve$, in which case $J_\ve(v^n)>0$.  

By the compactness of the unit ball of $L^q$ in the weak topology, 
we may take a weak $L^q$ limit $v=\lim_n v_n$.  Clearly, weak limits preserve 
the property that $v$ is even and that $v$  is a bell-shaped function.  We now need to show that $v$ satisfies the constraint $\|v\|_{L^q}=1$, which is non-trivial since norms are in general only 
lower semicontinuous  with respect to weak limits (and hence, we can only guarantee 
$\|v\|_{L^q}\leq 1$). 

We show now that, there exists a subsequence $\{n_k\}_k$ so that
\begin{equation}
\label{p:70}
\lim_k \int_{-\ve^{-1}}^{\ve^{-1}} |Q v^{n_k}(x)|^2 dx= \int_{-\ve^{-1}}^{\ve^{-1}} |Q v(x)|^2 dx
\end{equation}
Indeed, we check that 
\begin{eqnarray*}
\|Q v^n\|_{W^{1,q}} &=&\|\p_x[Q v^n]\|_{L^q}+\|Q v^n\|_{L^q}= 
\|v^n(\cdot+1/2)-v^n(\cdot-1/2)\|_{L^q}+ \|Q v^n\|_{L^q}\\
&\leq & C\|v^n\|_{L^q}=C
\end{eqnarray*}
Thus, by the compactness of the embedding $W^{1,q}(\rone) \Subset L^2(-\ve^{-1}, \ve^{-1})$ 
  we conclude that there is a subsequence $\{n_k\}_k$ and $z\in L^2$,  so that 
  $\|Q v^{n_k}-z\|_{L^2}\to 0$. By uniqueness of weak limits, (note that $Q v^{n_k} \rightharpoonup Q v$), it follows that  $z=Q v$ and hence \eqref{p:70}. 
  
Clearly now  
\begin{eqnarray*}
J^{\max}_\ve &=& \limsup_k J_\ve(v^{n_k})=\limsup_k 
(\int_{-\ve^{-1}}^{\ve^{-1}} |Q v^{n_k}(x)|^2 dx)= 
(\int_{-\ve^{-1}}^{\ve^{-1}} |Q v(x)|^2 dx) = J_\ve(v). 
\end{eqnarray*}
On the other hand, $\|v\|_{L^q}\leq 1$, by the lower semicontinuity of $\|\cdot\|_{L^q}$ with respect to weak limits, but clearly $v\neq 0$, since $J_\ve(v)= J_\ve^{\max}>0$.  
We will show that in fact $\|v\|_{L^q}=1$.  Indeed, assume the opposite $0<\rho=\|v\|_{L^q}<1$ and consider the function $v/\rho:\|v/\rho\|_{L^q}=1$.  Observe that 
$$
J_\ve(\f{v}{\rho})=J_\ve (v)\rho^{-2}= J^{\max}_\ve \rho^{-2}>J^{\max}_\ve.
$$
Thus, $\|v\|_{L^q}=1$ (otherwise, we get a contradiction with the constrained maximization problem \eqref{7}). This implies that $J_\ve(v)=J^{\max}_\ve$, otherwise, we   get a contradiction with the definition of $J^{\max}_\ve$. Thus, we have shown that the weak limit $v$ is indeed a maximizer for  \eqref{7}. 

\subsection{Euler-Lagrange equations for the maximizer of \eqref{7}}
\label{sec:3.2} 
Before we proceed to the actual proof, it is relevant
to mention a few words about our strategy. We will have no essential difficulties in deriving the Euler-Lagrange equation (see \eqref{p:15} below) on the set\footnote{for the precise definition of $\Om$ see below}  $\Om\sim \{x:v'(x)<0\}$ using the standard calculus of variations arguments. The reason  is  that on the compact subsets of  $\Om$, the maximizer $v$ is already a strictly decreasing function and therefore, for each fixed $z$, with support inside $\Om$ and regardless of its increasing/decreasing behavior, there will be $\la=\la(z)>0$, so that $v+\la z$ will be an acceptable function for the maximization problem \eqref{7}.  

On the other hand, we will have issues deriving appropriate equations on  $\rone\setminus \Om$, the reason being that on every non-trivial interval of the set, we will have 
$v=const.$ and hence  the perturbations $z$ must be increasing in this interval, in order for $v+\la z, \la>0$ to be an acceptable function for the maximization problem \eqref{7}. 

We will show that   
the set $\rone\setminus \Om$ consists of isolated points. To that end, under the 
assumption that there are non-trivial intervals inside $\rone\setminus \Om$, we 
derive an Euler-Lagrange equation for $v$ on such intervals (see \eqref{p:67} below), which in turn will imply that the maximizer $v$ is trivial, a contradiction.

First, we start with a technical issue. Since $v\geq 0$ and $v\downarrow$ in $(0,\infty)$, there is $\si=\inf\{x>0: v(x)=0\}\leq \infty$. If $\si<\infty$, we henceforth restrict our attention to $(-\si,\si)$. Clearly $v(y)=0$ for all $y\geq \si$.  

Consider perturbations of $v$ of the form $v+\la z$, where $\la>0$ and $z\in C^\infty_0(\rone)$, so that $\textup{supp}\  z\subset (-\ve^{-1}+1, \ve^{-1}-1) \cap (-\si,\si)$. We have 
\begin{eqnarray*}
 \int_{-\infty}^\infty (v(x)+\la z(x))^q dx & = & \int_{-\infty}^\infty v^q(x) dx +\la 
q\int_{-\infty}^\infty v^{q-1}(x) z(x) dx+ O(\la^2), \\
&=& 1+\la 
q\int_{-\infty}^\infty v^{q-1}(x) z(x) dx+ O(\la^2),  \\ 
 J_\ve(v+\la z) &=& \dpr{Q(v+\la z)}{Q(v+\la z)}_{L^2(-\ve^{-1}, \ve^{-1})}= \\
 &=& J_\ve(v)+
 2\la \dpr{Q^2 v}{z}  +O(\la^2).
\end{eqnarray*}
We now define the set $\Om$. Roughly speaking,  we would like to define \\ 
$\Om=\{x: v'(x)<0\}$. This is however impossible, since $v'$ is merely a distribution. 

Instead, we define $\Om^+$ to be the maximal open subset of $(0,\infty)$, 
so that for every compact subinterval of 
it $[c,d]\subset \Om^+$, there is $\de_{c,d}>0$ with 
$v'(x)\leq -\de_{c,d}$ in $\cd'(c,d)$ sense. Equivalently, we may define  
$$
\Om^+=\{x_0>0: \exists r_0, \de>0,  
\dpr{v'}{ \psi} \leq -\de \int\psi: \forall \psi\geq 0 \in \cd(x_0-r_0,x_0+r_0)\}
$$
Define $\Om^-=\{-x: x\in \Om^+\}$ and finally $\Om=\Om^-\cup \Om^+$. 
\subsubsection{Euler-Lagrange on $\Om$} 
\label{sec:3.2.1}
Due to our requirement for bell-shaped test functions in \eqref{7}, we need to impose extra restrictions on the function $z$. To that end, note that $\Om$ is an open set and 
 fix an interval  $[a,b] \subset \Om$. By the definition of $\Om$ and 
 compactness, it is clear that there exists $\de=\de_{a,b}$, so that $v'\leq -\de$ in $\cd'(a,b)$ sense. 
 
Fix a function $z\in C^\infty_0(a,b)$. Clearly, for each such $z$, there exists $\la_0=\la_0(z,\de_{a,b})$, so that for all $0<\la<\la_0$,  $(v+\la z)/\|v+\la z\|_{L^q}$ satisfy all constraints. That is, we do not need to restrict over $z$ in terms of positivity or bell-shapedness\footnote{which will be the case in the Section \ref{sec:4.2.2} below}.  
Thus, we have that for all $0<\la<\la_0(z)$, 
\begin{eqnarray*}
J_\ve\left(\f{v+\la z}{\|v+\la z\|_{L^q}}\right) &=&  \f{J_\ve(v+\la z)}{\|v+\la z\|_{L^q}^2}=\f{J^{\max}_\ve+
 2\la \dpr{\cm v}{z} +O(\la^2)}{(1+\la 
q\int_{-\infty}^\infty v^{q-1}(x) z(x) dx+ O(\la^2))^{2/q}}=\\
&=& \f{J^{\max}_\ve+
 2\la \dpr{\cm v}{z} +O(\la^2)}{1+2 \la 
\int_{-\infty}^\infty v^{q-1}(x) z(x) dx+ O(\la^2)}= \\
&=& J^{\max}_\ve+
2\la (\dpr{\cm v}{z} - J^{\max}_\ve \dpr{v^{q-1}}{z})+O(\la^2) 
\end{eqnarray*} 
Since $J_\ve\left(\f{v+\la z}{\|v+\la z\|_{L^q}}\right)\leq J^{\max}_\ve$, 
we conclude that for all $z\in C^\infty_0(a,b)$ 
\begin{equation}
\label{p:10}
\dpr{\cm v -J^{\max}_\ve v^{q-1} }{z}\leq 0.
\end{equation} 
Since there are no rectrictions on $z$ (other than compact support), it follows that \\  $\dpr{\cm v -J^{\max}_\ve v^{q-1} }{z}=0$ and hence  
\begin{equation}
\label{p:15}
\cm v -J^{\max}_\ve v^{q-1}=0,
\end{equation} 
in the $\cd'(a,b)$ sense. This is the Euler-Lagrange equation that we were looking for.  
Note that according to our derivation, it holds in the compact subsets of $\Om$. Note that from here, we obtain   $v=\left(\f{1}{J^{\max}_\ve} \cm v\right)^{\f{1}{q-1}}$, which is smooth. These could be iterated further to show that $v\in C^\infty$ 
on  open sets $(a,b)$, whenever $[a,b]\subset \Om$.  
\subsubsection{Euler-Lagrange over  non-trivial intervals of $\Om^c$} 
\label{sec:4.2.2}
Our goal will be to show that such \underline{non-trivial intervals do not exist}. To that end, we will first assume that they do exist and then, we will be able to derive an Euler-Lagrange equation on them, which will then lead to a contradiction. 
Take such an interval, say $[a_0, a_1]\subset \Om^c$, $0\leq a_0<a_1$. \\
\\
{\bf Claim:} $v=const.$ in $\cd'[a_0, a_1]$ sense, i.e. $v$ is constant on connected 
components\footnote{But it may be a different constant on a different component} of $\Om^c$. 
\begin{proof} 
To prove that, assume the opposite, namely that $v$ is not a constant in  $\cd'(a_0,a_1)$. Thus (since we know $v'\leq 0$), there exists a test function $0\leq \psi_0\in \cd(a_0,a_1)$, so that $\dpr{v'}{\psi_0}=-\alpha<0$. 
For some small $\ka\in (0,1)$,   the set $V_\ka=\{x\in (a_0, a_1): \psi_0(x)>\ka\}$ will be nonempty and open. Take an interval $(\tilde{a}_0, \tilde{a}_1)\subset V_\ka$, so that 
$\tilde{a}_1-\tilde{a}_0<1$. Take an arbitrary function 
$\psi\in \cd(\tilde{a}_0, \tilde{a}_1)$, so that $0\leq \psi(y)<\ka$ 
for $y\in (\tilde{a}_0, \tilde{a}_1)$. It follows that $\psi(y)<\ka<\psi_0(y)$ and hence 
$\dpr{v'}{\psi}\leq \dpr{v'}{\psi_0}=-\al<0$. Also 
$\int\psi(y) dy\leq \ka(\tilde{a}_1-\tilde{a}_0)<1$, whence 
\begin{equation}
\label{pl1}
\dpr{v'}{\psi}\leq -\al<-\al \int\psi(y) dy,
\end{equation}
for all $\psi\in \cd'(\tilde{a}_0, \tilde{a}_1)$, with $0\leq \psi<\ka$. One can now extend \eqref{pl1} to hold for all $0\leq \psi\in \cd'(\tilde{a}_0, \tilde{a}_1)$. Hence, $(\tilde{a}_0, \tilde{a}_1)\subset \Om$ in contradiction with    $(\tilde{a}_0, \tilde{a}_1)\subset (a_0, a_1)\subset \Om^c$. 
\end{proof}
Now that we have established that $v$ is a.e. constant on any non-trivial interval 
$(a_0, a_1)\subset \Om^c$, we will derive the Euler-Lagrange equation for \eqref{7} on it. 
We will consider first the case $a_0>0$, the other case will be considered separately. \\
{\bf Case I: $a_0>0$}\\
To fix the ideas, we consider first the case when the interval 
$[a_0,a_1]$ is isolated from the rest of $\Om^c$, that is, there exists $r>0$, so that 
$(a_0-r,a_0)\subset \Om$ and $(a_1, a_1+r)\subset \Om$. 

Fix  $\theta>0$ be so small that 
$(a_{0}-\theta, a_{0}), (a_{1}, a_{1}+\theta) \subset\Om$. Consider a test function $z\in C^\infty_0(a_{0}-\theta, a_{1}+\theta)$.  

Clearly, since $v=const$ on $(a_{0}, a_{1})$, {\it we need to require that the function $z'\leq 0$} on $(a_{0}, a_{1})$, in order for $v+\la z$ to be bell-shaped  function\footnote{so that it is acceptable entry for the minimization problem \eqref{7}} (recall $\la>0$). 
Denote  
\begin{equation}
\label{p:40}
G(x)=\cm v-J^{\max}_\ve v^{q-1}. 
\end{equation}
Since $v=const.$ a.e. on $(a_0,a_1)$,  $G$ is a continuous function on 
$(a_0, a_1)$. Following the approach of the previous section, we derive the equation \eqref{p:10}, which in the new notation says that  
\begin{equation}
\label{p:20}
\dpr{G}{z}\leq 0,
\end{equation}
for all $z$, so that $v+\la z$ is an admissible entry for \eqref{7}. We will show that  
 \eqref{p:20} implies 
\begin{eqnarray}
\label{p:25}
& & \int_{a_{0}}^{b} G(x) dx\leq 0\\
\label{p:27}
& & \int_{b}^{a_{1}} G(x) dx\geq 0.
\end{eqnarray}
for all $b \in (a_{0}, a_{1})$. Applying \eqref{p:25} for $b=a_1-$ and \eqref{p:27} for 
$b=a_0+$ and taking limits, yields $\int_{a_0}^{a_1} G(x) dx=0$. \\
\\
{\bf Claim:} From \eqref{p:25} and \eqref{p:27}, one can infer that  $G=0$ on $(a_{0}, a_{1})$. 
\begin{proof}
By the definition of $G$, the fact that $v=const.$ 
 and Lemma \ref{le:3}, we conclude that $G$ is non-increasing and continuous 
 function on $(a_{0}, a_{1})$. Now, we have several cases.  
 If there exists a $b\in (a_0, a_1)$, so that $ \int_{a_{0}}^{b} G(x) dx<0$ and 
 $\int_{b}^{a_{1}} G(x) dx>0$, then by continuity, there will be $b_0\in (a_0,b)$, so that $G(b_0)<0$ and $b_1\in (b,a_1)$, so that  $G(b_1)>0$, a contradiction with the fact that $G$ is non-increasing. Otherwise, for all $b\in (a_0, a_1)$, we have that either $\int_{a_0}^b G(x) dx=0$ or $\int_{b}^{a_{1}} G(x) dx=0$. But since  $\int_{a_0}^{a_1} G(x) dx=0$, 
 $$
 \int_{b}^{a_{1}} G(x) dx=\int_{a_0}^{a_1} G(x) dx - \int_{a_0}^b G(x) dx=-\int_{a_0}^b G(x) dx. 
 $$
Thus, $\int_{a_0}^b G(x) dx=0$ for all $b\in (a_0, a_1)$. Thus, for any  $b_0, b_1\in (a_0, a_1)$, we get $\int_{b_0}^{b_1} G(x) dx=\int_{a_0}^{b_1} G(x) dx- 
\int_{a_0}^{b_0} G(x) dx=0$ and hence $G(x)\equiv 0$ in $(a_0, a_1)$. 
 \end{proof}
  Thus, the Euler-Lagrange equation is in 
  the form $G(x)=0$ for $x\in (a_{0}, a_{1})$ or 
 \begin{equation}
 \label{pli}
\cm v-J^{\max}_\ve v^{q-1}=0, 
 \end{equation}
 if we can show \eqref{p:25} and \eqref{p:27}. 

We show only \eqref{p:25}, the proof of \eqref{p:27} is similar. 
Fix $b: a_{0}<b< a_{1}$. For all $0<\de<<1$, introduce an {\it even} test function, which for $x>0$ is given by 
$$
z_\de(x)=\left\{
\begin{array}{ll}
0 & 0<x<a_{0} - 2\de \\
1 & a_{0} - \de<x<b + \de\\
0 & x> b + 2 \de
\end{array}
\right.
$$
so that $z_\de\in C^\infty$, $z_\de$ is \underline{strictly increasing} in 
$(a_0 - 2\de,a_0 - \de)$  and $z_\de$ is \underline{strictly decreasing} in 
$(b + \de, b+2\de)$. \\

Note that  $v+\la z_\de$ is acceptable for the maximization problem \eqref{7} for\footnote{Here even though $z_\de$ is increasing in $(a_0 - 2\de,a_0 - \de)$, 
this is acceptable since $(a_0 - 2\de,a_0 - \de)\subset \Om$ and hence, we have no restrictions over the test functions, as long as $\la_\de<<1$} $0<\la<\la_\de$.  According to \eqref{p:20}, $\dpr{G}{z_\de}\leq 0$.  Hence 
\begin{eqnarray*}
& & \int_{a_0}^b G(x) dx=\limsup_{\de\to 0} 
\int_{a_{0} - 2\de}^{b + 2 \de} G(x) z_\de(x) dx  \leq  0,
\end{eqnarray*}
which is \eqref{p:25}. 

The general case, in which $[a_0, a_1]$ is not isolated from $\Om^c$ (i.e. there is no $r>0$, so that $(a_0-r, a_0) \cup (a_1, a_1+r)\subset \Om$) is treated in the same way. Indeed, this was needed only in the very last step, in the construction of the function $z_\de$. But clearly, one can carry out a similar construction of $z_\de$, if one has a sequence of intervals $I_{\de_j}=(c_{\de_j}-\de_j, c_{\de_j}) \subset \Om, c_{\de_j}<a_0$, so that $\lim c_{\de_j}=a_0$. Similarly, for the proof of \eqref{p:27}, one needs a sequence of intervals $J_{\de_j}=(d_{\de_j}, d_{\de_j}+\de_j) 
\subset \Om, d_{\de_j}>a_1$, so that $\lim d_{\de_j}=a_1$. 

Finally, it remains to observe that every non-trivial interval of $\Om^c$ is contained in 
$[a_0, a_1]\subseteq \Om^c$ with the property $dist(a_0, \Om)=0=dist(a_1, \Om)$ and hence, we can carry the constructions of $z_\de$ and hence the validity of 
\eqref{p:25} and \eqref{p:27} follows. We then derive \eqref{pli} on every such interval $(a_0, a_1)$.  \\
{\bf Case II: $a_0=0$}\\
This case is similar to the previous one. We again assume that there exists $r>0$, so that $(-a_1-r, -a_1)\cup (a_1, a_1+r)\subset \Om$, the general case being reduced to this one by arguments similar to those in the case $a_0>0$.  

Again,  $G$ is even, we have that $v=const.$
 on the interval $(-a_1,a_1)$ and $G$ is non-increasing in $(0,a_1)$. 
 We will show that for every $b \in (0, a_1)$, we have 
 \begin{eqnarray}
 \label{p:30}
 & &  \int_{b}^{a_1} G(x) dx\geq 0,    \\
 \label{p:35}
 & &  \int_{0}^{b} G(x) dx\leq 0
 \end{eqnarray}
 Let us first prove  that assuming \eqref{p:30} and \eqref{p:35}, 
 one must have $G=0$. Indeed, if 
 we apply \eqref{p:30} for $b\to 0+$ and \eqref{p:35} for $b\to a_1-$, we see that $\int_0^{a_1} G(x) dx=0$. 
 
 Again, if we assume that $\int_0^b G(x) dx=0$ for all $b\in (0,a_1)$, we again conclude by the 
  continuity of $G$ that $G(x)=0: 0\leq x\leq a_1$. If one has 
  $\int_0^{b_0} G(x) dx<0$ for some $b_0\in (0,a_1)$, we have that 
 $\int_{b_0}^{a_1} G(x) dx> 0$ (since $\int_{0}^{a_1} G(x) dx=0$) and hence, $G$  cannot be non-increasing function (as in Case I), a contradiction. Thus $G(x)=0$ in $(0,a_1)$. 
 
 Thus, it remains to show \eqref{p:30} and \eqref{p:35}. 
 Fix $b>0$ and let $0<\de<<1$. For \eqref{p:35}, 
 select  
$$
z_\de(x)=\left\{
\begin{array}{ll}
0 & x<-b-2\de \\
1 & -b- \de<x<b+\de\\
0 & x> b+ 2 \de
\end{array}
\right.
$$
and $z_\de$ is $C^\infty$, even and  increasing in $(-b-2\de, -b-\de)$ and decreasing in $(b+\de, b+2\de)$. Clearly, for $\la>0$, $v+\la z_\de$ is admissible for \eqref{7}, for some small $\la>0$. Thus, applying \eqref{p:20} for this $z_\de$ and passing to appropriate limits as $\de\to 0+$, we obtain, as above 
$$
2\int_0^b G(x) dx= \int_{-b}^b G(x) dx=\liminf_{\de\to 0+} \int G(x) z_\de(x) dx \leq 0. 
$$
For \eqref{p:30}, we construct $z_\de$ to be an even function, so that  
$$
z_\de(x)=\left\{
\begin{array}{ll}
0 & x<b-2\de \\
-1 & b- \de<x<a_1+\de\\
0 & x> a_1 + 2 \de
\end{array}
\right.
$$
Now, we require that $z_\de$ is decreasing in $(b-2\de, b-\de)$ and it is increasing from 
$(a_1+\de, a_1+2\de)$. Note that $z_\de$ is still acceptable as a perturbation - in the sense that $v+\la z_\de$ is non-increasing in $(0,\infty)$ for all $\la=\la(\de)>0$ small enough (for this, recall that  $(a_1+\de, a_1+2\de)\subset \Om$ and hence, there exists $\sigma(\de)$, so that $v'<-\sigma$ in $\cd'(a_1+\de, a_1+2\de)$ sense). 
We have again 
$$
-\int_{b}^{a_1} G(x) dx=\limsup_{\de\to 0+} \int G(x) z_\de(x)dx \leq 0. 
$$
Thus, we have established \eqref{p:30} and \eqref{p:35} and thus the Euler-Lagrange equation 
\begin{equation}
\label{p:67}
0=G(x)=\cm v- J^{\max}_\ve v^{q-1}.
\end{equation}
\subsubsection{The set $\Om^c$ consists of isolated points only} 
 \label{sec:4.2.3}  
We will now show that if an equation like \eqref{p:67} holds in a non-trivial interval, say $(a,b)$ and $v$ is a bell-shaped, locally integrable function, 
which is constant on $(a,b)$, then $v(x)=const.$ a.e. on $\rone$ (which would be a contradiction). 

Indeed, $\cm v$ is in fact a differentiable function on $(a,b)$, which is non-increasing on $(0,\infty)$, according to Lemma \ref{le:3}. Thus, taking a derivative of \eqref{p:67} (and taking into account that $v=const$ on $(a,b)$)  leads to 
\begin{equation}
\label{y}
0=(\cm v)'(x)=\int_{x}^{x+1} v(y) dy- \int_{x-1}^{x} v(y) dy.
\end{equation}
for all $x\in (a,b)$. 

If $b>1$, we see that since $v$ is non-increasing in $(0,\infty)$ (by Lemma \ref{le:p}), it follows that $x\to \int_{x}^{x+1} v(y) dy$ is non-increasing and continuous. By \eqref{y}, it follows that $\int_{x}^{x+1} v(y) dy=const.$ for $x\in (a,b)$. Hence,  we must have $v=const.$ a.e. in every  interval in the form $(x-1, x+1)$ (for all $x\in (a,b)$). By iterating this argument in all $(0,\infty)$, a contradiction. If $(a,b)\subset (0,1)$, we can again argue as in Lemma \ref{le:3} to establish that again 
$v=const.$  in $(0,\infty)$. Thus, we cannot have non-trivial intervals 
$(a,b)\subset \Om^c$ and hence $\Om^c$ consists of isolated points only. 

\subsubsection{The Euler-Lagrange equation on $\rone$} 
Before deriving the Euler-Lagrange equation for the maximizer $v$, 
let us recapitulate what we have shown so far for $v$. We managed to show that 
 $\Om$ is a dense open set, so that $\Om^c$ consists of isolated points only.  
 Finally, on $\Om$, $v$ is a continuous function, and the equation \eqref{p:15} holds on every interval  $(a,b)\subset \Om$.  
 
 We will now show that \eqref{p:15} holds for almost all $x\in \Om^c$. First of all, recall that \\ 
 $G=\cm v-J_{\ve}^{\max}  v^{q-1}$ is locally integrable function (as a sum of an $L^q$ function and $L^{\f{q}{q-1}}$ functions) and hence, almost all points are Lebesgue points for it. 
 
 Let $x_0>0, x_0\in \Om^c$, so that $x_0$ is a Lebesgue point for $G$. We have shown that $x_0$ is an isolated point of $\Om^c$, which implies the existence of intervals inside $\Om$, which approximate $x_0$. That is, there are 
 $(a_j-\de_j, a_j+\de_j)\subset \Om^+$, $(b_j-\de_j, b_j+ \de_j) \subset \Om^+$, 
 so that $a_j+\de_j< x_0<b_j-\de_j$ and  $\lim_j a_j=\lim_j b_j=x_0$. In addition, we can clearly select these intervals to be very short, namely we require 
 $\lim_j \f{\de_j}{b_j-a_j}=0$. Construct now a sequence of even $C^\infty$ test functions, given by  
 $$
z_j(x)=\left\{
\begin{array}{ll}
0 & 0<x<a_j-\de_j/2  \\
1 & a_j <x<b_j\\
0 & x> b_j + \de_j/2
\end{array}
\right.
 $$
where $z_j$ is strictly increasing in $(a_j-\de_j/2, a_j)\subset \Om$ and strictly decreasing in $(b_j, b_j+\de_j/2)$. We have already shown that  functions of the form 
$v\pm \la z_j$ will be non-increasing in $(0,\infty)$  and it will otherwise satisfy all the restrictions of the optimization problem \eqref{7}, provided $0<\la<<1$. 
Thus, accoridng to \eqref{p:10}, we have $\dpr{G}{z_j}\leq 0$ and $-\dpr{G}{z_j}\leq 0$ and hence  
$$
 \int_{a_j}^{b_j} G(x) dx+ \int_{a_j-\de_j/2}^{a_j} G(x) z_j(x) dx + 
 \int_{b_j}^{b_j+\de_j/2} G(x) z_j(x) dx=  \dpr{G}{z_j}= 0
$$
Dividing both sides by $b_j-a_j$, and taking $\lim$  as $j\to \infty$ (noting that $x_0$ is a Lebesgue point), we get 
\begin{equation}
\label{z}
G(x_0)+
\lim \f{\de_j}{(b_j-a_j)} [\f{1}{\de_j} \int_{a_j-\de_j/2}^{a_j} G(x) z_j(x) dx+
\f{1}{\de_j} \int_{b_j}^{b_j+\de_j/2} G(x)z_j(x)dx]=0. 
\end{equation}
Thus, we need to show that the limit above exists and it is equal to zero. 
To that end, note that $\|z_j\|_\infty\leq 1$ and 
\begin{eqnarray*}
|\f{1}{\de_j} \int_{a_j-\de_j/2}^{a_j} G(x) z_j(x) dx+
\f{1}{\de_j} \int_{b_j}^{b_j+\de_j/2} G(x)z_j(x)dx| &\leq & \|G\|_{L^\infty_x}\leq 
(\|v\|_{L^\infty}+J_{\ve}^{\max}\|v\|_{L^\infty}^{q-1}) \\
&\leq & (v(0)+ J_{\ve}^{\max} v(0)^{q-1}),
\end{eqnarray*}
since $v$ is non-increasing a.e. in $(0, \infty)$. 
The last inequality, combined with $\lim_j \f{\de_j}{b_j-a_j}=0$ 
shows that \eqref{z} implies $G(x_0)=0$. Hence, for all Lebesgue points of $G$, we have $G(x)=0$. Thus, 
\begin{equation}
\label{lk}
\cm v- J_{\ve}^{\max} v^{q-1}=0 \ \ a.e. 
\end{equation}
\subsection{Taking a limit as $\ve\to 0$: Constructing a solution to \eqref{p:1}} 
\label{sec:4.3} 
From \eqref{lk}, 
\begin{equation}
\label{p:45}
\cm v_\ve- J^{\max}_\ve v_\ve^{q-1}=0,
\end{equation}
which is satisfied $a.e.$ and also in 
$\cd'(0, \ve^{-1}-1)$ sense. From it, we learn that $v_\ve$ is 
$H^4(0, \ve^{-1}-1)$ and consequently, by iterating this argument,  
$H^\infty(0, \ve^{-1}-1)=\bigcap_{s=1}^\infty H^s(0, \ve^{-1}-1)$  function. Recall also that by construction $\|v_\ve\|_{L^q(\rone)}=1$ and there exists $\ve_0>0$,  so that $\inf_{\ve<\ve_0} J^{\max}_\ve>0$, see \eqref{p:55}. We will now take several consecutive subsequences of $\ve\to 0$, in order to ensure that the limit satisfies \eqref{p:1}. 

First, take $\ve_n\to 0+$, so that $\lim_n J^{\max}_{\ve_n}=\limsup_{\ve\to 0} J^{\max}_\ve:=J_0>0$. Second, out of this constructed sequence $\ve_n$, take a subsequence, say $\ve_{n_k}$, so that $v_{\ve_{n_k}}\rightharpoonup v$ in a weak $L^q$ sense, for some $v\geq 0, v \in L^q$. This is possible, by the sequential compactness of the unit ball in the weak $L^q$ topology\footnote{But again, this argument, so far,  does not guarantee that $v\neq 0$}. By the uniqueness of weak limits (by eventually taking further subsequence), we also get $v^{q-1}_{\ve_{n_k}} \rightharpoonup_k v^{q-1}$ in weak $L^{\f{q}{q-1}}$ sense.  We also have $\cm v_{\ve_{n_k}} \rightharpoonup \cm v$ in the weak $L^q$ topology, since for every test function $\psi\in C^\infty_0$, we have by the self-adjointness of $\cm$ and $\cm:L^r\to L^r, 1\leq r\leq \infty$,  
$$
\dpr{\cm v_{\ve_{n_k}}}{\psi}= \dpr{  v_{\ve_{n_k}}}{\cm \psi}\to_k \dpr{v}{\cm \psi}=\dpr{\cm v}{\psi}.
$$ 
Thirdly,  we show that the limiting function $v$ is non-zero. To that end, it will suffice to establish   that 
\begin{equation}
\label{p:60} 
\liminf_{\ve\to 0+} \|v_\ve\|_{L^{q}(-\ve^{-1}, \ve^{-1})}>0.
\end{equation}
Assuming that \eqref{p:60} is false, we will reach a contradiction. Indeed, let $\de_j\to 0+$ be a sequence  so that 
$\lim_j \|v_{\de_j}\|_{L^{q}(-\de_j^{-1}, \de_j^{-1})}=0$. Thus,   
\begin{eqnarray*}
J^{\max}_{\de_j}=J_{\de_j}(v_{\de_j}) &\leq &  \int_{-\de_j^{-1} }^{\de_j^{-1} }|Q v_{\de_j}(x)|^2 dx= 
\int_{-\de_j^{-1} }^{\de_j^{-1} } |\int_{x-1/2}^{x+1/2} v_{\de_j}(y) dy|^2dx. 
\end{eqnarray*}
We now use a refined version of the Gagliardo-Nirenberg estimate that we have used before.   
\begin{eqnarray*}
\int_{-\de_j^{-1} }^{\de_j^{-1} } |\int_{x-1/2}^{x+1/2} v_{\de_j}(y) dy|^2dx &=&  \int_{-\de_j^{-1} }^{\de_j^{-1} }|\int_{x-1/2}^{x+1/2} v_{\de_j}(y)
\chi_{(-\de_j^{-1}-1, \de_j^{-1}+1)} dy|^2 dx\leq  \\
&\leq & \|Q[v_{\de_j}\chi_{(-\de_j^{-1}-1, \de_j^{-1}+1)}]\|_{L^2}^2 
\end{eqnarray*}
and hence
\begin{eqnarray*}
\|Q[v_{\de_j}\chi_{(-\de_j^{-1}-1, \de_j^{-1}+1)}]\|_{L^2} \leq  
\|Q[v_{\de_j}\chi_{(-\de_j^{-1}-1, \de_j^{-1}+1)}]\|_{\dot{W}^{1,q}}^{1/q-1/2}
\|Q [v_{\de_j}\chi_{(-\de_j^{-1}-1, \de_j^{-1}+1)}]]\|_{L^q}^{3/2-1/q}. 
\end{eqnarray*}
While a simple differentiation shows that on one hand, 
$$
\|Q[v_{\de_j}\chi_{(-\de_j^{-1}-1, \de_j^{-1}+1)}]\|_{\dot{W}^{1,q}}\leq 2 \|v_{\de_j}\|_{L^q}=2,
$$
we also have  by Cauchy-Schwartz 
 \begin{eqnarray*}
\|Q [v_{\de_j} \chi_{(-\de_j^{-1}-1, \de_j^{-1}+1)}]\|_{L^q}^q &\leq &  \int_{-\infty}^{\infty} (\int_{x-1/2}^{x+1/2} v_{\de_j}(y)\chi_{(-\de_j^{-1}-1, \de_j^{-1}+1)}(y) dy)^q dx \leq \\
& \leq & 
\int_{-\de_j^{-1}-1}^{\de_j^{-1}+1} v_{\de_j}^q (y) dy \leq 
2 \|v_{\de_j}\|_{L^q(-\de_j^{-1}, \de_j^{-1})}^q \to 0. 
\end{eqnarray*}
The last inequality here follows by the fact that $v_{\de_j}$ is 
non-increasing in $(0,\infty)$ and therefore  
$\int_{\de_j^{-1}}^{\de_j^{-1}+1}v_{\de_j}^q (y) dy \leq \int_{0}^{\de_j^{-1}}v_{\de_j}^q (y) dy$, if $\de_j^{-1}\geq 1$, which we have assumed anyway.  
Thus, we will have proved that 
$$
\liminf_j J^{\max}_{\de_j}\leq 0,
$$
which is in contradiction with \eqref{p:55}. Thus, we have established \eqref{p:60}. 

We are now ready to take a limit as $\ve\to 0$ in \eqref{p:45}. Indeed, take \eqref{p:45} for $\ve=\ve_{n_k}, k=1, 2, \ldots$. Fix a test function $\psi\in C^\infty_0$. There exists $k_0$, so that for $k\geq k_\psi$, $supp \ \psi\subset (0,\ve_{n_k}^{-1})$.  Thus, for $k\geq k_\psi$, 
we get\footnote{by testing \eqref{p:45}, which is valid on the support of $\psi$} 
$$
\dpr{\cm v_{\ve_{n_k}}}{\psi} - J^{\max}_{\ve_{n_k}}  \dpr{v_{\ve_{n_k}}^{q-1}}{\psi}=0.
$$
Take a limit as $k\to \infty$. By our constructions, we have that 
$J^{\max}_{\ve_{n_k}}  \dpr{v_{\ve_{n_k}}^{q-1}}{\psi}\to J_0 \dpr{v^{q-1}}{\psi}$, since $J^{\max}_{\ve_{n_k}}\to J_0$, $\dpr{v_{\ve_{n_k}}^{q-1}}{\psi}\to \dpr{v^{q-1}}{\psi}$ by the weak $L^{\f{q}{q-1}}$ convergence. 

We also have 
$\dpr{\cm v_{\ve_{n_k}}}{\psi}=\dpr{v_{\ve_{n_k}}}{\cm  \psi}\to_k \dpr{v}{\cm  \psi}=
\dpr{\cm v}{\psi}$. 
Thus, 
we have established the desired identity
\begin{equation}
\label{fin}
\cm v - J_0 v^{q-1} =0 
\end{equation}
valid for all $x>0$. By the symmetry, it is also valid for $x<0$. It is clear that $v$ is now infinitely smooth\footnote{starting with 
$v\in L^q(\rone)$, it is easy to conclude that $\cm v$ is smooth, which in turn implies that $v\in C^1(\rone)$ etc.}  function on $(0, \infty)$.  Recall though, that for Theorem \ref{theo:1} we needed to solve $\cm w - w^{q-1}=0$. One can easily construct $w$, based on the solution $v$ of \eqref{fin}. More precisely, if we take 
$w:=J_0^{-\f{1}{2-q}} v$,  then $w$ will satisfy \eqref{p:2}. Theorem \ref{theo:1} is proved. 
\section{An alternative proof of the Friesecke-Wattis theorem}
We quickly indicate how our ideas can be turned into a new proof of the Friesecke-Wattis theorem. Specifically, as we saw in the previous section, it is clear that if we are just interested in the existence of traveling waves for \eqref{p:1} (but not in bell-shaped solutions), 
it is a good idea to consider  following constrained maximization problem (compare to \eqref{7}) 
\begin{equation}
\label{77}
\left| 
\begin{array}{c}
J_\ve(v)=\int_{-\ve^{-1}}^{\ve^{-1}}|Q v(x)|^2 dx   \to \max \\
\textup{subject to} \ \ v\geq 0; \  I(v)=\int_{\rone} v^q(x)dx =1.
\end{array}
\right.
\end{equation}
First off, the arguments in Section \ref{sec:3.1} apply unchanged (by just skipping the bell-shapedness of $v$) to prove that \eqref{77} has a maximizer, say $v$. 

Following the argument of Section \ref{sec:3.2} and more specifically, the following  identities 
\begin{eqnarray*}
& & \|v+\la z\|_{L^q}^q=\|v\|_{L^q}^q + \la q \dpr{v^{q-1}}{z}+O(\la^2) \\
& & J_\ve(v+\la z)=  J_\ve(v)+2\la \dpr{Q^2 v}{z}+O(\la^2), \\
& & J_\ve\left(\f{v+\la z}{\|v+\la z\|_{L^q}}\right) =   J^{\max}_\ve+
2\la (\dpr{\cm v}{z} - J^{\max}_\ve \dpr{v^{q-1}}{z})+O(\la^2)\leq J_\ve(v),
\end{eqnarray*}
which were established there, it follows that 
$$
\dpr{\cm v-J^{\max}_\ve v^{q-1}}{z}\leq 0
$$
for all test functions\footnote{Note that here, there is no restriction whatsoever  on the increasing/decreasing character of $z$, due to the nature of \eqref{77}}   $z$. It follows that $v$ satisfies 
$$
\cm v - J^{\max}_\ve v^{q-1}=0.
$$
This of course produces a family $v_\ve: \ supp\  v_\ve\subset (-\ve^{-1}-1, \ve^{-1}+1)$, which easily can be shown to converge\footnote{following the methods of Section \ref{sec:4.3}}  (after an eventual subsequence) to a $v: \ supp\  v\subset \rone$, which solves 
$$
\cm v - J_0 v^{q-1}=0.
$$
Setting $w:=J_0^{-\f{1}{2-q}} v$ again provides a solution to $w^{1/p}= \cm w$ as is required by \eqref{p:2}.

\end{document}